 \newtheorem{theorem}{Theorem}[section]
 \newtheorem{example}[theorem]{Example}
 \theoremstyle{definition}
 \theoremstyle{remark}
 \newtheorem{remark}[theorem]{Remark}
\begin{document}


\title[Lightweight LCP Construction for Very Large Collections of Strings]{Lightweight LCP Construction \\for Very Large Collections of Strings}
\thanks{\textcopyright  2016. This manuscript version is made available under the CC-BY-NC-ND 4.0 license http://creativecommons.org/licenses/by-nc-nd/4.0/ \\
The final version of this manuscript is in press in Journal of Discrete Algorithms. DOI: 10.1016/j.jda.2016.03.003.}


\author[A.J. Cox]{Anthony J. Cox}
\address[A.J. Cox]{Illumina Cambridge Ltd., United Kingdom}
\email{acox@illumina.com}
\author[F. Garofalo]{Fabio~Garofalo} 
\address[F. Garofalo]{University of Palermo, Dipartimento di Matematica e Informatica, ITALY.}
\email{garofalo{\_}uni@yahoo.it}
\author[G. Rosone]{Giovanna~Rosone}
\address[G. Rosone]{University of Pisa, Dipartimento di Informatica, ITALY, \\
	University of Palermo, Dipartimento di Matematica e Informatica, ITALY.}
\email{giovanna.rosone@unipi.it}
\author[M. Sciortino]{Marinella~Sciortino} 
\address[M. Sciortino]{University of Palermo, Dipartimento di Matematica e Informatica, ITALY.}
\email{marinella.sciortino@unipa.it}

\begin{abstract}
The longest common prefix array is a very advantageous data structure that, combined with the suffix array and the Burrows-Wheeler transform, allows to efficiently compute some combinatorial properties of a string useful in several applications, especially in biological contexts. Nowadays, the input data for many problems are big collections of strings, for instance the data coming from ``next-generation'' DNA sequencing (NGS) technologies. In this paper we present the first lightweight algorithm (called \texttt{extLCP}) for the simultaneous computation of the longest common prefix array and the Burrows-Wheeler transform of a very large collection of strings having any length. The computation is realized by performing disk data accesses only via sequential scans, and the total disk space usage never needs more than twice the output size, excluding the disk space required for the input.
Moreover, \texttt{extLCP} allows to compute also the suffix array of the strings of the collection, without any other further data structure is needed. Finally, we test our algorithm on real data and compare our results with another tool capable to work in external memory on large collections of strings.
\end{abstract}




\keywords{Longest Common Prefix Array, Extended Burrows-Wheeler Transform, Generalized Suffix Array}

\maketitle


\def\cbwt(#1){\texttt{bwt}(#1)}
\def\cebwt(#1){\texttt{c-ebwt}(#1)}
\def\lbwt(#1){\texttt{bwt}(#1)}
\def\lbwt{\texttt{bwt}}
\def\cbwt{\texttt{bwt}}
\def\bwt{\texttt{ebwt}}
\def\cebwt{\texttt{c-ebwt}}
\def\lbcr{\texttt{bcr}}
\def\lcp{\texttt{lcp}}

\def\bwtS{\textsf{ebwt}(\mathcal{S})}

\def\bwtS#1{\textsf{ebwt}_{#1}(\mathcal{S})}
\def\lcpS#1{\textsf{lcp}_{#1}(\mathcal{S})}
\def\gsaS#1{\textsf{gsa}_{#1}(\mathcal{S})}
\def\bwtw#1{\textsf{bwt}_{#1}({w})}
\def\lcpw#1{\textsf{LCP}_{#1}({w})}
\def\rank{\textsf{rank}}
\def\select{\textsf{select}}
\def\smallW{{w}}
\def\bigS{\mathcal{S}}
\def\BCR{\texttt{BCR}}
\def\BWT{BWT}
\def\EBWT{EBWT}
\def\LCP{LCP}
\def\SA{SA}
\def\GSA{GSA}
\def\BCRLCP{\texttt{extLCP}}
\def\EGSA{\texttt{eGSA}}
\def\EMBWT{\texttt{BCRext} }
\def\EMBWTpp{\texttt{BCRext++} }
\def\BWTE{\texttt{bwte}}
\def\sort#1{\textrm{sort}(#1)}
\def\MEMLIM{\texttt{MEMLIMIT}}

\newcommand{\fbwt}{{\mathcal B}{\mathcal W}{\mathcal T}}

\section{Introduction}

The \emph{suffix array} (\SA), the \emph{longest common prefix array} (\LCP) and the \emph{Burrows-Wheeler transform} (\BWT) are data structures with many important applications in stringology \cite{ohlebusch2013book}.

The \LCP\  array of a string contains the lengths of the longest common prefixes of the suffixes pointed to by adjacent elements of the suffix array of the string \cite{ManbeMyersr:1993}.
The most immediate utility of the \LCP\ is to speed up suffix array algorithms and to simulate the more powerful, but more resource consuming, suffix tree.
Indeed, the \LCP\  array, combined with \SA\ or \BWT, simplifies the algorithms for some applications such as the rapid search for maximal exact matches, shortest unique substrings and shortest absent words \cite{OhlebuschGogKugelSpire2010,BellerGogOhlebuschSchnattinger2013,Herold_Kurtz_Gieg2008,Abouelhoda2004}.

Real world situations may present us with datasets that are not a single string but a large collection of strings, examples being corpora of web pages or the data coming from ``next-generation'' DNA sequencing (NGS) technologies. It is common for the latter in particular to generate aggregations of hundreds of millions of DNA strings  (sometimes called ``reads''), leading to a need for algorithms that deal not only with the collective nature of the data but also with its large overall size. In this paper we give an algorithm for the construction of \LCP\ array that addresses both these requirements.

The importance of the \LCP\ array has meant that its construction has been well-studied in the literature. For instance, there are algorithms that work in linear time and in internal memory (cf. \cite{Kasai:2001}). Other algorithms work in semi-external memory (see for instance \cite{KarkkainenManziniPuglisi:2009}) or directly via \BWT\ (see \cite{BellerGogOhlebuschSchnattinger2013}).
In \cite{Karkkainen:2006}, the first algorithm establishing that \LCP\ array can
be computed in $O(sort(n))$ complexity in the external memory model (i.e., the complexity of
sorting $n$ integers in external memory) is presented. Recently, in \cite{Bingmann13} an external memory algorithm to construct the suffix array of a string based on the induced sorting principle is provided. Such an algorithm can be augmented to also construct the \LCP\ array. The overhead in time and I/O operations for this extended algorithm over plain suffix array construction is roughly two.
In another recent paper \cite{KarkkainenKempa2014}, the authors introduced an external memory \LCP\ array construction algorithm for a string.
The main idea in \cite{KarkkainenKempa2014} is to divide the string into blocks that are small enough to fit in RAM and then scan the rest of the string once for each block. Such a strategy needs $16n$ bytes of disk space, where $n$ is the length of the string.

One's initial strategy for building the \LCP\ array of a collection might therefore be to concatenate its members into a single string and then apply one of the above methods. However, some stumbling blocks become apparent, and it is not immediate how these single-string approaches can be adapted to circumvent them. First, many existing algorithms for computing the \LCP\ array require data structures of size proportional to the input data to be held in RAM, which has made it impractical to compute the \LCP\ array of a very large string.
Moreover, by definition, the values in \LCP\ array should not exceed the lengths of the strings, so one would ideally employ distinct end-marker symbols to act as separators between the strings of the collection. However, assigning a different end-marker to each string is not feasible when the number of strings in the collection is very large. On the other hand, the use of the same end-marker symbol throughout the collection could give rise to \LCP\ values with the undesirable properties of sometimes exceeding the lengths of the strings and having dependencies on the order in which the strings are concatenated.

In the literature, the problem of the computation of \LCP\ array for a collection of strings has been considered in \cite{Shi:1996} and in \cite{LouzaTellesCiferri2013}. Moreover, a preliminary version of the results presented in this paper is in \cite{BauerCoxRosoneSciortino2012}. In particular, this paper includes full proofs of theorems from \cite{BauerCoxRosoneSciortino2012} and more detailed examples.
Furthermore, additional experimental results are also described. Finally, in this paper we introduce a simple solution for dealing with strings having different lengths, allowing \BCRLCP\ to work on any collection of strings.

Defining $N$ and $K$ as the sum of the lengths of all strings and the length of the longest string in the collection respectively, the approach in \cite{Shi:1996} requires $O(N\log K)$ time, but the $O(N \log N)$ bits of internal memory needed to store the collection and its \SA\ in internal memory prevents the method from scaling to massive data.

The external memory algorithm (called \EGSA) in \cite{LouzaTellesCiferri2013} builds both suffix and \LCP\ arrays for a collection of strings.
Such interesting strategy has an overhead in working disk space, so it seems that \EGSA\ cannot be used for huge collections of strings.

The methodology presented in this paper attempts to overcome the above mentioned limitations.
In fact, the aim of our paper is to design a lightweight algorithm for the computation, at the same time, of the longest common prefix array and the Burrows-Wheeler transform of a very large collection of strings having different or same length.
The computation is realized by performing disk data accesses only via sequential scans.
The total disk space usage never needs more than twice the output size, excluding the disk space required for the input.

In our approach, we directly compute both data structures, without needing to concatenate the strings of the collection and without requiring pre-computed auxiliary information such as the suffix array of the collection.

In order to do this, our algorithm is built upon the approach introduced in \cite{BauerCoxRosoneTCS2013} related to an extension of the \emph{Burrows-Wheeler transform} to a collection of strings defined in \cite{MantaciRRS07}.

In particular, our algorithm (called \BCRLCP) adds to the strategy in \cite{BauerCoxRosoneTCS2013} (called \BCR) some lightweight data structures, and allows the simultaneous computation of both the longest common prefix and the Burrows-Wheeler transform of a collection of $m$ strings. Such a computation is performed in $O((m+\sigma^2) \log N)$ bits of memory, with a worst-case time complexity of $O(K(N+\sort{m}))$, where $\sort{m}$ is the time taken to sort $m$ integers, $\sigma$ is the size of the alphabet, $N$ is the sum of the lengths of all strings and $K$ is the length of the longest string. Note that \BCRLCP\ needs at most $(2N-m)(\log \sigma + \log K)+N\log \sigma$ bits of disk space and requires $O\left(NK / (B\min(\log_\sigma N,\log_K N)) \right)$ disk I/O operations, where $B$ is the disk block size.

The low memory requirement enables our algorithm to scale to the size of dataset encountered in human whole genome sequencing datasets: in our experiments, we compute the \BWT\ and \LCP\ of collections as large as $800$ million strings having length $100$.

Moreover, \BCRLCP\ allows to compute also the suffix array of the strings of a collection (called \emph{generalized suffix array}).
Such a further computation increases the number of I/O operations, but it does not need further data structures in internal memory.

Section~\ref{sec:prel} gives preliminaries that we will use throughout the paper, in Section~\ref{sec:ADS_collection} we define the main data structures for a collection of strings, Section~\ref{bcrStrategy} describes an incremental strategy for computing the \BWT\ of a collection of strings of any length. Section~\ref{sec:algorithm} shows the sequential computation of the \LCP\ array for the collection. We present details on the efficient implementation of the algorithm and its complexity in Sections~\ref{sec:implementation} and \ref{sec:BCRLCP_complexity}, respectively. Computational results on real data are described in Section \ref{sec:experiments}. Section \ref{sec:conclusion} is devoted to some conclusions.

\section{Preliminaries}
\label{sec:prel}

Let $\Sigma =\{c_1, c_2, \ldots, c_\sigma\}$ be a finite ordered alphabet with $c_1< c_2< \ldots < c_\sigma$, where $<$ denotes the standard lexicographic order.
We append to a finite string $w\in \Sigma^*$ an end-marker symbol $\$$ that satisfies $\$ < c_1$. We denote its characters by $w[1], w[2],\ldots,w[k]$, where $k$ is the \emph{length} of $w$, denoted by $|w|$.
Note that, for $1 \leq i \leq k-1$, $w[i]\in \Sigma$ and $w[k]=\$ \notin \Sigma$.
A \emph{substring} of a string $w$ is written as $w[i,j] = w[i] \cdots w[j]$, with a substring $w[1,j]$ being called a \emph{prefix}, while a substring $w[i,k]$ is referred to as a \emph{suffix}.
A range is delimited by a square bracket if the correspondent endpoint is included, whereas the parenthesis means that the endpoint of the range is excluded.

We denote by $\bigS=\{w_0,w_1,\ldots,w_{m-1}\}$ the collection of $m$ strings of length at most $K$.
We suppose that to each string $w_i$ is appended an end-marker symbol $\$_i$ smaller than $c_1$, and $\$_i<\$_j$ if $i<j$. Let us denote by $N$ the sum of the lengths of all strings in $\bigS$.

For $j=0,\ldots,|w_i|-1$, we refer to the suffix $w_i[|w_i|-j,|w_i|]$ of a string $w_i$ as its \emph{$j$-suffix}; the \emph{0-suffix} of $w_i$ contains $\$_i$ alone.
The length of a $j$-suffix is equal to $j$ (up to considering the end-marker).
Let us denote by $\bigS_j$ the collection of the $j$-suffixes of all the strings of $\bigS$.

In our algorithm presented in this paper we use a unique end-marker $\$=c_0$ for all strings in $\bigS$, because we set $w_s[|w_s|] < w_t[|w_t|]$ if and only if $s < t$, so that if two strings $w_s$ and $w_t$ share the $j$-suffix, then $w_s[|w_s|-j,|w_s|] < w_t[|w_t|-j,|w_t|]$ if and only if $s < t$.
However, to ease the presentation distinct end-markers are shown.

We say that the symbol $w_i[|w_i|-j-1]$ is \emph{associated with} the $j$-suffix of $w_i$ for $j=0,\ldots,|w_i|-2$, because $w_i[|w_i|-j-1]$ precedes the $j$-suffix of $w_i$, i.e. the suffix $w_i[|w_i|-j,|w_i|]$. Moreover, we assume that $w_i[|w_i|]=\$_i$ is associated with the $(|w_i|-1)$-suffix of $w_i$, i.e. $w_i[1,|w_i|]$.

\section{SA, LCP and BWT of a collection of strings}\label{sec:ADS_collection}
Suffix array, longest common prefix array and Burrows-Wheeler transform are all commonly defined with reference to a single string. This section describes the extension of such notions to a collection of strings.

The \emph{suffix array} \SA\ of a string $w$ is an array containing the permutation of the integers $1,2, \ldots, |w|$ that arranges the starting positions of the suffixes of $w$ into lexicographical order.
There exist some natural extensions of the suffix array to a collection of strings (see \cite{Shi:1996}).

We define the \emph{generalized suffix array} $GSA$ of the collection $\bigS=\{w_0,w_1,\ldots,w_{m-1}\}$ as the array of $N$ pairs of integers $(t,j)$, corresponding to the lexicographically sorted suffixes $w_{j}[t, |w_{j}|]$, where $1 \leq t \leq |w_{j}|$ and $0 \leq j \leq m-1 $. In particular, $GSA[q]=(t,j)$ is the pair corresponding to the $q$-th smallest suffix of the strings in $\bigS$, i.e. to the suffix $w_{j}[t, |w_{j}|]$.

The \emph{longest common prefix array} of a string contains the lengths of the longest common prefixes of the suffixes pointed to by adjacent elements of \SA\ of the string \cite{PuglisiTurpin2008}.
The \emph{longest common prefix array} \LCP\ of a collection $\bigS$ of strings, denoted by $\lcp(\bigS)$, is an array storing the length of the longest common prefixes between two consecutive suffixes of $\bigS$ in the lexicographic order. For every $j=1, \ldots, N-1$, if $GSA[j-1]=(p_1,p_2)$ and $GSA[j]=(q_1,q_2)$, $LCP[j]$ is the length of the longest common prefix of suffixes starting at positions $p_1$ and $q_1$ of the words $w_{p_2}$ and $w_{q_2}$, respectively. We set $LCP[1]=0$.

For $i < j$, a \emph{range minimum query} $RMQ(i, j)$ on the interval $[i, j]$ in the \LCP\ array returns an index $k$ such that $LCP[k] = \min \{LCP[l]: i \leq l \leq j\}$.
If $GSA[r]=(p_1,p_2)$ and $GSA[s]=(q_1,q_2)$, it is not difficult to show that the length of the longest common prefix between the suffixes starting at positions $p_1$ and $q_1$ of the words $w_{p_2}$ and $w_{q_2}$ corresponds to $LCP[RMQ(r + 1, s)]$.

The suffix array of a string is related to the \emph{Burrows-Wheeler transform} introduced in \cite{bwt94}.
The original Burrows and Wheeler transform (\BWT) on a string is described as follows:
given a word $w\in \Sigma^*$, the output of \BWT\ is the pair $(L,I)$ obtained by lexicographically sorting the list of the conjugates of $w$. In particular, $L$ is the word obtained by concatenating the last symbol of each conjugate in the sorted list and $I$ is the position of $w$ in such a list. For instance, if $w=mathematics$ then $(L,I)=(mmihttsecaa,7)$\footnote{Note that, in the original \BWT, the symbol $\$$ is not appended to the input string $w$.}.

Actually, in several implementations of the \BWT, in order to improve the efficiency of the computation, one can consider a variant of the \BWT\ by the sorting the suffixes of $w\$$ rather than the conjugates of $w$.
To ensure the reversibility of the transform, one needs to append the symbol $\$$ at the end of the input string $w =w[1]\cdots w[k-1]$, where $w[i] \in \Sigma$, $\$ \notin \Sigma$ and $\$ < a \in \Sigma$. Hence the $\$$ symbol is at the position $k$ of $w$, so that $w =w[1]\cdots w[k-1] w[k]$. In this variant, the output $\lbwt(w)$ is a permutation of $w$, obtained as concatenation of the letters that (circularly) precede the first symbol of the suffix in the lexicographically sorted list of its suffixes: for $i=1, \ldots, k$, $\lbwt(w)[i]=w[SA[i]-1]$; when $SA[i]=1$, then $\lbwt(w)[i]=\$$ (it wraps around). In other words, the $i$-th symbol of the \BWT\ is the symbol just before the $i$-th suffix. For instance, if $w=mathematics\$$ then $\lbwt(w)=smmihtt\$ecaa$.
Note that, in this case, the second output of the \BWT\ (the index $I$) is not useful, because one can use the position of the $\$$-symbol for recovering the input string.

The Burrows-Wheeler transform can be extended to a collection of strings.
In its original definition \cite{MantaciRRS07} (see also \cite{MantaciRRS08}), such a reversible transformation (called \EBWT) produces a string that is a permutation of the characters of all strings in $\bigS$ and it does not make use of any end-marker.
The \EBWT\ of a collection $\bigS$ is a word (denoted by $\cebwt(\bigS)$)) obtained by letter permutation of the words in $\bigS$ together a set of indexes (denoted by ${\mathcal I}$) used to recover the original collection.
In particular, $\cebwt(\bigS)$ is obtained by concatenating the last symbol of each element in the sorted list of the conjugates of the words in $\bigS$. The sorting exploits an order relation defined by using lexicographic order between infinite words. For instance, if $\bigS=\{abac, cbab, bca, cba\}$, the output of \EBWT\  of $\bigS$ is the couple $(ccbbbcacaaabba, \{1,9,13,14\})$.

In this paper we use a more efficient variant of \EBWT\  of a collection of strings, that needs to append a different end-marker to each string of $\bigS$. In this case the word obtained as output, denoted by $\bwt(\bigS)$, is obtained by concatenating the symbols just preceding each suffix of the list of the suffixes of the words in $\bigS$ in according with the lexicographic order.
The output $\bwt(\bigS)$ can be also defined in terms of the generalized suffix array of $\bigS$. In particular, if $GSA[i]=(t,j)$ then $\bwt(\bigS)[i] = w_j[(t-1)]$; when $t=1$, then $\bwt(\bigS)[i]=\$_j$.
By using the above example, $\bwt(\bigS)=cbaacbb\$_0bacca\$_2ab\$_3\$_1$.
The external memory methods for computing the output $\bwt(\bigS)$ are given in \cite{BauerCoxRosoneCPM11}.
Furthermore, in practice, such methods use a unique end-marker rather than $m$ different end-markers.
Note that $\bwt(\bigS)$ differs, for at least $m$ symbols, from $\lbwt$ applied to the string obtained by concatenating all strings in $\bigS$.

\section{Computing the \EBWT\ of a collection of strings having any length}\label{bcrStrategy}
Our approach for computing the \LCP\ array is built upon the \BCR\ algorithm introduced in \cite{BauerCoxRosoneTCS2013} to compute  $\bwt(\bigS)$, where $\bigS$ is a collection of strings.
Note that \BCR\ in \cite{BauerCoxRosoneTCS2013} is described for collections of strings of fixed length. In this section we focus on the description of \BCR\ on collections of strings of any length.

The \BCR\ algorithm \cite{BauerCoxRosoneTCS2013} computes the $\bwt(\bigS)$ of the collection of strings $\bigS = \{w_0,w_1,\ldots,w_{m-1}\}$ without concatenating the strings. In the sequel, we assume that $0 \leq i \leq m-1$.
Note that we consider the symbol $\$_i$ appended to each string $w_i$.
We suppose that $K$ is the maximal length (including the end-markers) of the strings in $\bigS$ and $N$ the sum of their lengths (including the end-markers). We assume that $K \ll m$.

In Subsection \ref{subsec:BCR_algo} we describe how the algorithm works on a collection of strings of any length, Subsection \ref{subsec:BCR_DS} is devoted to detail the involved data structures needed to allow a lightweight implementation and to reduce the I/O operations.

\subsection{The algorithm}\label{subsec:BCR_algo}
The basic idea of the \BCR\ algorithm is to scan all the strings $w_0,w_1,\ldots,w_{m-1}$ in the collection $\bigS$ from right to left at the same time. This means that, at each iteration, it considers a ``slice'' of (at most) $m$ characters from the strings in the collection. \BCR\ builds incrementally, via $K$ iterations, the Burrows-Wheeler transform of $\bigS$ by simulating, step by step, the insertion of all suffixes having the same length in the list of sorted suffixes.

At the end of each iteration $j=0,\ldots, K-1$, \BCR\ algorithm builds a partial $\bwt(\bigS)$ (denoted by $\bwtS{j}$). It is referred to as partial, because if one inserts the $\$$-characters in their correct position (rather than the symbols that precede the $j$-suffixes), then one immediately obtains the $\bwt(\bigS_j)$ of all $j$-suffixes of $\bigS$  (if the length of some string is greater than $j$).
For instance, if $\bigS=\{AATACACTGTACCAAC\$_0, GAACAGAAAGCTC\$_1\}$ (including the distinct end-markers), $\bwtS{3}$ corresponds to the  computation of $\bwt(\{AAC\$_0, CTC\$_1\})$ when the end-markers are inserted.

The key point of this strategy is to establish the positions where the new symbols associated with the $j$-suffixes must be inserted in $\bwtS{j-1}$ in order to obtain $\bwtS{j}$. In other words, we have to find the position of each $j$-suffix in the list of sorted $t$-suffixes with $t < j$, without explicitly computing such a list.

In particular, at the step $j=0$, we have to consider the symbols associated with the $0$-suffixes and establish how they must be concatenated in order to obtain $\bwtS{0}$. Since we use (implicit) distinct end-markers and $\$_i  < \$_j$ if $i<j$, then it is easy to verify that $\bwtS{0}$ is obtained by considering the last symbol of each string $w_i$ and by concatenating them in the same order as the strings appear in the collection: $\bwtS{0}=w_0[|w_0| - 1]w_1[|w_1| - 1] \cdots w_{m-1}[|w_{m-1}| - 1]$.

At the iteration $1 \leq j < K-1$, we need to retain $\bwtS{j-1}$, keep track of the positions within it of the symbols $w_i[|w_i|-j]$ associated with the $(j-1)$-suffixes of $\bigS$ and build $\bwtS{j}$ by inserting at most $m$ new symbols into $\bwtS{j-1}$, i.e. $w_i[|w_i|-j-1]$ for $j < |w_i|-1$ or $\$$ for $j = |w_i|-1$.
Such operations simulate the insertion of the $j$-suffixes into the list of sorted suffixes computed in the previous step.
Note that if $j \geq |w_i|$ then the string $w_i$ is not considered.

The process ends at the step $K-1$, when \BCR\ considers the $j$-suffixes of the strings $w_i$ with $|w_i|=K$ and  inserts the end-markers of such strings into $\bwtS{K-2}$, in order to obtain $\bwtS{K-1}$, i.e. the $\bwt(\bigS)$.

In order to find, at each step $j \geq 1$, the positions where the new (at most) $m$ symbols must be inserted into $\bwtS{j-1}$, we use the notions related to backward search, such as the table $C$ and the $\rank$ function that have been extensively used in FM-index (see \cite{Ferragina:2000}). Formally, given a string $w$, $C_w[x]$ is a table of $\sigma$ integers that, for each character $x \in \Sigma$, contains the number of occurrences of lexicographically smaller characters in the string $w$. Moreover, given a symbol $x$, an integer $r$ and a string $w$, the function $\rank(x, r, w)$ returns the number of occurrences of character $x$ in the prefix $w[1, r]$.

In our context we suppose that, at the step $j-1 \geq 0$, we have inserted the symbol $x=w_i[|w_i|-j]$ at the position $r$ of $\bwtS{j-1}$.
At the step $j \geq 1$, for each $i$ with $j < |w_i|$, the position $p$, where we have to insert the new symbol associated with the $j$-suffix $w_i[|w_i|-j, |w_i|]$
into $\bwtS{j-1}$, is computed in the following way:
$$p=C_{\bwtS{j-1}}[x] + \rank(x, r, \bwtS{j-1}) + 1.$$

Note that such a formula corresponds to the computation of the so-called $LF$-mapping, a fundamental operation of the FM-index.

\subsection{Data Structures} \label{subsec:BCR_DS}

In order to decrease the I/O operations, \BCR\ does not use the table $C$ and computes the \rank\ function on segments of $\bwtS{j-1}$  rather than on the entire $\bwtS{j-1}$. More precisely, \BCR\ considers  $\bwtS{j}$ as the concatenation of $\sigma +1$ segments $B_j(0),B_j(1), \ldots,B_j(\sigma)$, where the symbols in $B_j(0)$ are the characters preceding the lexicographically sorted $0$-suffixes of $\bigS_j$ (such suffixes consist in only the end-marker symbols) and the symbols in $B_j(h)$, with $h \geq 1$, are the characters preceding the lexicographically sorted suffixes of $\bigS_j$ starting with $c_h \in \Sigma$. It is easy to see that, $\bwtS{0}=B_0(0)$ and the segments $B_0(h)$ (for $h=1,\ldots,\sigma$) does not contain any symbols.

Now, we can omit the table $C$, because it is only useful for determining the segment where the suffix must be inserted.
Indeed, the table $C$ simply allows us to establish that if the $j$-suffix starts with the symbol $c_z= w_i[|w_i| - j]$ then it is larger than the suffixes starting with a symbol smaller than $c_z$. This is equivalent to say that the $j$-suffix must be inserted in the block containing the suffixes starting with the symbol $c_z$, i.e. in $B_{j}(z)$.

 One can verify that the position $r$ in $\bwtS{j-1}$ corresponds to a position $s$ in $B_{j-1}(v)$ where $c_v$ is the first symbol of $(j-1)$-suffix of $w_i$, i.e $c_v=w_i[|w_i| - (j-1)]$.

Now, the new symbol $w_i[|w_i|-j-1]$ (or the end-marker $\$$ for the last step) must be inserted in the position $r$ of $B_{j}(z)$, where $r$ is obtained by computing the number of occurrences of $c_z=w_i[|w_i|-j]$ in $B_{j-1}(0),\ldots, B_{j-1}(v-1)$ and in $B_{j-1}(v)[1,s]$.  Moreover, the computation of $B_{j-1}(v)[1,s]$ corresponds to the computation of the function $\rank(c_z, s, B_{j-1}(v))$.
Note that, in order to compute the occurrences in $B_{j-1}(0),\ldots, B_{j-1}(v-1)$, a table of $O(\sigma^2 \log (N))$ bits of memory can be used.

We remark that for each step $j$, our strategy computes, for each $j$-suffix $\tau$, its definitive position in the sorted list of the elements of $\bigS_j$,  regardless of the fact that some $j$-suffixes lexicographically smaller than $\tau$ have not been considered yet in the current step $j$. Actually, this means that the algorithm computes the definitive position (called \emph{absolute position}) of the symbol associated to $\tau$ in the correspondent segment $B_j$.

During each iteration $j$, we use the arrays $U$, $N$, $P$, $Q$ containing at most $m$ elements each.  The values in such arrays are updated during each step. For ease of presentation we denote by $U_j$, $N_j$, $P_j$, $Q_j$ the arrays at the step $j$, described as follows.
For each $q = 0, \ldots , m-1$:
\begin{itemize}
\item $U_{j}$ is an array that stores all the symbols, if they exist, located at the position $j+1$ from the right. More formally, $U_{j}[q]=w_q[|w_q|-j-1]$ if $j<|w_q|-1$, $U_{j}[q]=\$_q$ if $j=|w_q|-1$, $U_{j}[q]=\#$ if $j>|w_q|-1$. Note that $\#$ is a symbol that does not appear in any string of the collection and it is not involved in the computation of $\bwtS{j}$.
    The array $U_{j}$ takes $O(m \log \sigma)$ bits of workspace.
\item $N_{j}$ is an array of integers such that $N_{j}[q]=i$ if the $j$-suffix of the string $w_i \in \bigS$ (with $j<|w_i|$) is the $q$-th $j$-suffix in the lexicographic order. It uses $O(m \log m)$ bits of workspace.
\item $P_{j}$ is an array of integers such that $P_{j}[q]$ is the absolute position of the symbol $w_i[|w_i| - j - 1]$ (or the end-marker $\$_i$), associated with the $j$-suffix of $w_i$ (with $j<|w_i|$), in $B_{j}(z)$, where $i = N_{j}[q]$ and $c_z$ is the first symbol of the $j$-suffix of $w_i$, i.e. $c_z = w_i[|w_i| - j]$.
It needs $O(m \log N)$ bits of workspace.
\item $Q_{j}$ is an array of integers such that $Q_j[q]$ stores the index $z$ where $c_z=w_i[|w_i|-j]$ where $N_{j}[q]=i$, i.e. the first symbol of the $j$-suffix of $w_i$ (with $j<|w_i|$). It uses $O(m \log \sigma)$ bits of workspace.
\end{itemize}

Note that, at each step $j$ ranging from $0$ to $K-1$, the algorithm considers only the first $t$ values of arrays $P_j$, $Q_j$ and $N_j$ where $t$ is the number of the strings of the collection having length greater than or equal to $j$.

At the start of the iteration $j$, we compute the new values in $Q_{j}$, $P_{j}$, $N_j$  in the following way. We suppose that the $(j-1)$-suffix of $w_i$ is the $t$-th $(j-1)$-suffix in lexicographic order, i.e. $i = N_{j-1}[t]$. Hence, we know that the first symbol of the $j$-suffix of $w_i$ is $c_z = w_i[|w_i| - j]$ and has been inserted in the position $P_{j-1}[t]=s$ in $B_{j-1}(v)$, where $Q_{j-1}[t]=v$ and $c_v$ is the first symbol of $(j-1)$-suffix of $w_i$, i.e $c_v=w_i[|w_i|-(j-1)]$.

The position $r$ of the new symbol $w_i[|w_i|-j-1]$ (or the end-marker $\$_i$ for $j=|w_i|$) that must be inserted in $B_{j}(z)$ is obtained by computing the number of occurrences of $c_z=w_i[|w_i|-j]$ in $B_{j-1}(0),\ldots, B_{j-1}(v-1)$ and in $B_{j-1}(v)[1,s]$, where $s=P_{j-1}[t]$ is the position of $c_z$ in $B_{j-1}(v)$.

Note that during the iteration $j$, we store the values in $P_j$, $Q_j$, $N_j$ and $U_j$ by replacing the corresponding values in $P_{j-1}$, $Q_{j-1}$, $N_{j-1}$ and $U_{j-1}$, so $Q_j[t]=z$, $P_j[t]=r$ and $N_j[t]=i$.

Finally, we sort $Q_{j}$, $P_{j}$, $N_j$ where the first and the second keys of the sorting are the values in $Q_j$ and $P_j$, respectively. So that we can insert, in sequential way, the symbols $U_j[q]$ (for $q = 0, \ldots, m-1$), into each segment $B_{j}(h)$ for $h=0,\ldots,\sigma$.

\begin{example}\label{ex:ebwt}
We suppose that $\bigS=\{w_0,w_1\}=\{AATACACTGTACCAAC\$_0, GAACAGAAAGCTC\$_1\}$ (including the distinct end-markers).
The algorithm starts, at the step $j=0$, by computing the $\bwtS{0}$. The $0$-suffixes are $\$_0,\$_1$, and the new symbols that we have to insert in $B_0(0)$ (the segment associated with the suffixes starting with the end-marker) are $C$ and $C$. We set $U_0=[C,C]$, $Q_0=[0,0]$ and $N_0=[0,1]$.
Since, we use (implicit) distinct end-markers and $\$_0 < \$_1$, we set $P_0=[1,2]$. So, $\bwtS{0}=B_0(0)=CC$.

Then, we consider $U_1=[A,T]$. Both symbols in $U_1$ should be inserted into $B_1(2)$ because both the associated suffixes ($C\$_0$ and $C\$_1$) start with the symbol $C$. So, we set $Q_1=[2,2]$ and $N_1=[0,1]$. The position in $B_1(2)$ of the symbol $A$ associated with the $1$-suffix of $w_0$ is equal to $1$ in the segment $B_1(2)$, indeed $C\$_0$ is the smallest suffix in $\bigS_{1}$ starting with the letter $C$.  The position in $B_1(2)$ of the symbol $T$ associated with the $1$-suffix of $w_1$ is equal to $2$ in the segment $B_1(2)$, because the symbol $C$ associated with $\$_1$ follows the symbol $C$ associated with $\$_0$. So we set $P_1=[1,2]$ and obtain $B_1(2)$ by inserting $A$ and $T$. Then $\bwtS{1}=CCAT$.

During the third step, the array $U_2=[A,C]$ is considered. Since the last inserted symbol of $w_0$ is $A$ then the new symbol $A$ associated with $AC\$_0$ must be inserted in the segment $B_2(1)$, whereas since the last inserted symbol of $w_1$ is $T$ then the new symbol $C$ associated with $TC\$_1$ must be inserted in the segment $B_2(4)$. So we set $Q_2=[1,4]$, $N_2=[0,1]$.
Since the number of occurrences of $A$ in $B_{1}(0),B_{1}(1)$ and in $B_{1}(2)[1,1]$ is $1$ then we have to insert the symbol $A$ at the position $1$ in $B_2(1)$.
Since the number of occurrences of $T$ in $B_{1}(0),B_{1}(1)$ and in $B_{1}(2)[1,2]$ is $1$ then we have to insert the symbol $C$ at the position $1$ in $B_2(4)$.
Since $AC\$_0$ and $TC\$_1$ are the smallest suffixes starting with $A$ and $T$ respectively, we set $P_2=[1,1]$.

The first three iterations are depicted in Figure \ref{fig:ThreeIterations}. The process continues via left extensions of suffixes until all symbols have been inserted.

\begin{figure}[!htb]
{\scriptsize
$$
\begin{array}{|c|c|}
\hline
\multicolumn{2}{|c|}{\mbox{Iteration 0}} \\
\hline
\bwtS{0}   & \mbox{Suffixes of $\bigS_0$} \\
\hline
B_{0}(0) & \\
\hline
\textbf{C}     & \$_0 \\
\hline
\textbf{C}     & \$_1 \\
\hline
	B_{0}(1) & \\
\hline
      &   \\
\hline
B_{0}(2) & \\
\hline
      &   \\
\hline
B_{0}(3) & \\
      &   \\
\hline			
B_{0}(4) & \\
      &   \\
\hline		
      &   \\
\hline		
\end{array}%
\qquad
\begin{array}{|c|c|}
\hline
\multicolumn{2}{|c|}{\mbox{Iteration 1}} \\
\hline
\bwtS{1}   & \mbox{Suffixes of $\bigS_1$} \\
\hline
B_{1}(0) & \\
\hline
C     & \$_0 \\
\hline
C     & \$_1 \\
\hline
B_{1}(1) & \\
\hline
&   \\
\hline
B_{1}(2) & \\
\hline
\textbf{A}     & C\$_0 \\
\hline
\textbf{T}     & C\$_1 \\
\hline
B_{1}(3) & \\
\hline
    &  \\
\hline
B_{1}(4) & \\
\hline		
      &   \\
\hline			
\end{array}
\qquad
\begin{array}{|c|c|}
\hline
\multicolumn{2}{|c|}{\mbox{Iteration 2}} \\
\hline
\bwtS{2}   & \mbox{Suffixes of $\bigS_2$} \\
\hline
B_{2}(0) & \\
\hline
C     & \$_0 \\
\hline
C     & \$_1 \\
\hline
B_{2}(1) & \\
\hline
\textbf{A}     & AC\$_0 \\
\hline
B_{2}(2) & \\
\hline
A     & C\$_0 \\
\hline
T     & C\$_1 \\
\hline
B_{2}(3) & \\
\hline
      &   \\
\hline		
B_{2}(4) & \\
\hline
\textbf{C}     & TC\$_1 \\
\hline
\end{array}
$$
}
\caption{$\bigS=\{w_0,w_1\}=\{AATACACTGTACCAAC\$_0, GAACAGAAAGCTC\$_1\}$. Note that the list of sorted suffixes is inserted in the figure in order to make it easier to understand.
In each table, the first column represents $\bwtS{j}=B_j(0)B_j(1)B_j(2)B_j(3)B_j(4)$ after each iteration. The new symbols inserted in each iteration are shown in bold.
It is verify that at the end of iteration $0$, we have $U_0=[C,C]$, $P_0=[1,2]$, $Q_0=[0,0]$, $N_0=[0,1]$. At the end of iteration $1$, we have $U_1=[A,T]$, $P_1=[1,2]$, $Q_1=[2,2]$, $N_1=[0,1]$. At the end of iteration $2$, we have $U_2=[A,C]$, $P_2=[1,1]$, $Q_2=[1,4]$, $N_2=[0,1]$.}\label{fig:ThreeIterations}
\end{figure}

\end{example}

\section{\LCP\ computation of a collection of strings via \EBWT}
\label{sec:algorithm}

The main goal of this section consists in the description of the strategy for computing, by using the \EBWT, the \LCP\ array of a massive collection of strings via sequential scans of the disk data. In particular, the main theorem of the section enables the simultaneous computation of both \LCP\ and \EBWT\ of a string collection $\bigS=\{w_0,w_1,\ldots,w_{m-1}\}$ of maximum length $K$. We recall that the last symbol of each string $w_i$ is the (implicit) end-marker $\$_i$.

Our method follows the \BCR\ algorithm in the sense that it scans all the strings from right-to-left in $K$ steps and simulates, at the step $j$, the insertion of suffixes of length $j$ into the sorted list of suffixes. This time, however, we wish to compute both the \LCP\ and \EBWT. So, at the step $j$, the longest common prefix array (denoted by $\lcpS{j}$) of the collection $\bigS_j$ of the suffixes having length at most $j$ is computed alongside $\bwtS{j}$. As well as computing an \LCP\ value for the inserted suffix, we must also modify the \LCP\ value for the suffix that comes after it to reflect the longest prefix common to it and the inserted suffix.

Our goal in this section is to frame these calculations in terms of $RMQ$s on sets of intervals within $\lcpS{j}$, serving as a preliminary to the next section, where we show how these computations may be arranged to proceed via sequential scans of the data.

It is easy to see that when $j=K-1$, $\lcpS{j}$ coincides with $\lcp(\bigS)$, the \LCP\ array of $\bigS$.
Since all $m$ end-markers are considered distinct, the longest common prefix of any pair of $0$-suffixes is $0$, so the first $m$ positions into $\lcpS{j}$ are $0$ for any $j\geq 0$.

The general idea behind our method is described in the following and depicted in Figure \ref{fig:ideaLCP}. At each step $j>0$,  the value of a generic element $\lcpS{j}[r]$ (with $r>m$) have to be computed by taking into account the $j$-suffix $\tau$ that is placed at the position $r$ of the sorted suffixes of $\bigS_j$. Such a value depends on the $f$-suffix $\tau_1$ placed at the position $r-1$ in the sorted list and, moreover, could lead an updating of the value $\lcpS{j}[r+1]$ (if it exists) corresponding to the $g$-suffix $\tau_2$, as shown in Figure \ref{fig:ideaLCP}. By using our method, the computation of $\lcpS{j}[r]$ can be realized by using the values of the arrays $\lcpS{j-1}$ and $\bwtS{j-1}$. In fact, if $\tau=c\gamma$ (where $c$ is a symbol and $\gamma$ is the $(j-1)$-suffix placed at a certain position $t$ in the sorted list of the suffixes of $\bigS_{j-1}$), then $c=\bwtS{j-1}[t]$.
Let us denote $\tau_1=a\gamma_1$ and $\tau_2=b\gamma_2$, where $a$ and $b$ are symbols and $\gamma_1$ and $\gamma_2$ are, respectively, the $(f-1)$-suffix and the $(g-1)$-suffix of $\bigS_{j-1}$ placed at the positions $d_1$ and $d_2$.  One can see that $\lcpS{j}[r]$ is equal to $0$ if $c\neq a$, otherwise it is equal to $1$ plus the longest common prefix between $\gamma_1$ and $\gamma$ computed in the array $\lcpS{j-1}$. Moreover $\lcpS{j}[r+1]$ is equal to $0$ if $c\neq b$, otherwise it is equal to $1$ plus the longest common prefix between $\gamma$ and $\gamma_2$ computed in the array $\lcpS{j-1}$. In this section we show how such a computation can be sequentially performed.

\begin{figure}[!htb]
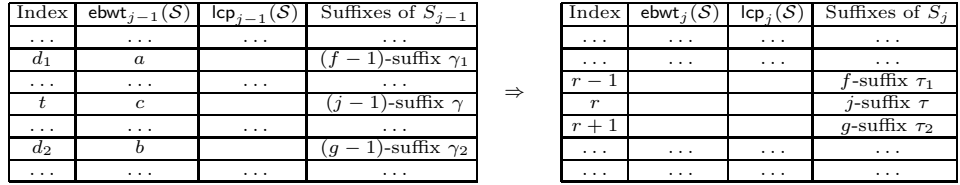

	{\scriptsize
		$$
		\begin{array}{|@{\ }c@{\ }|c@{\ }|c@{\ }|c@{\ }|}
		\hline
	\mbox{Index} & \bwtS{j-1} & \lcpS{j-1} & \mbox{Suffixes of } S_{j-1} \\
	\hline
	\ldots     & \ldots     & \ldots     & \ldots \\
	\hline
	d_1   & a     &       & (f-1)\mbox{-suffix } \gamma_1\\
	\hline
	\ldots     & \ldots     & \ldots     & \ldots \\
	\hline
	t     & c     &       & (j-1)\mbox{-suffix } \gamma\\
	\hline
	\ldots     & \ldots     & \ldots     & \ldots \\
	\hline
	d_2   & b     &       & (g-1)\mbox{-suffix } \gamma_2\\
	\hline
	\ldots     & \ldots     & \ldots     & \ldots \\
	\hline
\end{array}
\qquad
\begin{array}{@{\ }c@{\ }}
\hspace{-0.3cm}\Rightarrow\hspace{-0.5cm}
\end{array}
\qquad
\begin{array}{|@{\ }c@{\ }|c@{\ }|c@{\ }|c@{\ }|}
\hline
	\mbox{Index} & \bwtS{j} & \lcpS{j} & \mbox{Suffixes of } S_{j} \\
\hline
\ldots     & \ldots     & \ldots     & \ldots \\
\hline
\ldots     & \ldots     & \ldots     & \ldots \\
\hline
r-1   &       &       & f\mbox{-suffix } \tau_1\\
\hline
r     &       &       & j\mbox{-suffix } \tau\\
\hline
r+1   &       &       & g\mbox{-suffix } \tau_2\\
\hline
\ldots     & \ldots     & \ldots     & \ldots \\
\hline
\ldots     & \ldots     & \ldots     & \ldots \\
\hline
\end{array}%
$$
}
\caption{Iteration $j-1$ (on the left) and iteration $j$ (on the right).  }\label{fig:ideaLCP}
\end{figure}

Note that $\lcpS{j}$ can be considered the concatenation of $\sigma +1$ arrays $L_j(0),L_j(1), \ldots,L_j(\sigma)$ where, for $h=1,\ldots,\sigma$, the array $L_j(h)$ contains the values corresponding to the lengths of the longest common prefix of the suffixes of $\bigS_j$ that start with  $c_h \in \Sigma$, while $L_j(0)$ (corresponding to the $0$-suffixes) is an array of $m$ zeroes.
It is easy to see that $\lcpS{0}=L_0(0)$ and that $L_0(h)$ is empty for $h>0$.
We note that, for each $1 \leq h \leq \sigma$, $L_j(h)[1]=0$ and $L_j(h)[i]\geq 1$ for $i>1$, because the suffixes associated with such values share at the least the symbol $c_h \in \Sigma$.

As shown in Section \ref{bcrStrategy}, $\bwtS{j}$ can be partitioned in an analogous way into segments $B_j(0),B_j(1), \ldots,B_j(\sigma)$.
Given the segments $B_j(h)$ and $L_j(h)$, $h=0,\ldots,\sigma$, for the symbol $x$ occurring at position $r$ of $B_j(h)$ we define the $(j,h)$\emph{-LCP Current Interval} of $x$ in $r$ (denoted by $LCI_j^h(x,r)$) as the range $(d_1,r]$ in $L_{j}(h)$ (so we set $LCI_j^h(x,r)=L_j(h)(d_1,r]$), where
$d_1$ is the greatest position smaller than $r$ of the symbol $x$ in $B_{j}(h)$, if such a position exists.
If such a position does not exist, we define $LCI_j^h(x,r)=L_j(h)[r]$.
Analogously, we define for the symbol $x$ the $(j,h)$\emph{-LCP Successive Interval} of $x$ in $r$ (denoted by $LSI_j^h(x,r)$) as the range $(r,d_2]$ in $L_{j}(h)$ (so we set $LSI_j^h(x,r)=L_j(h)(r, d_2]$), where $d_2$ is the smallest position greater than $r$ of the symbol $x$ in $B_{j}(h)$, if it exists.
If such a position does not exist we define $LSI_j^h(x,r)=L_j(h)[r]$.

In order to compute the values $d_1$ and $d_2$ we use the function $\select$ that together with $\rank$ function play a fundamental role in FM-index. In particular, $\select(r,c,L)$ takes in input a symbol $c$, an integer $r$ and a string $L$ and finds the position of the $r$-th occurrence of $c$ in $L$.

In our context, if $d_1$ and $d_2$ exist, then $B_{j}(h)[t] \neq x$ for $t=d_1+1, \ldots, r-1$ and for $t=r+1, \ldots, d_2-1$, so it is easy to verify that $d_1=\select(\rank(x,r,B_{j}(h))-1,x,B_{j}(h))$ and $d_2=\select(\rank(x,r,B_{j}(h))+1,x,B_{j}(h))$.

We observe that the computation of the minimum value into $LCI_j^h(x,r)$ and $LSI_j^h(x,r)$ is equivalent to the computation of $RMQ(d_1, r)$ and $RMQ(r, d_2)$, respectively. We cannot directly compute these values, because we build each $B_j(h)$ and $L_j(h)$ in sequential way, so we do not know the left extreme of $LCI_j^h(x,r)$ and the right extreme of $LSI_j^h(x,r)$ of all symbols that we have to insert in each $B_j(h)$, $h=1,\ldots,\sigma$.

The following theorem, related to \cite[Lemma 4.1]{GogOhlebuschO11}, shows how to compute the segments $L_j(h)$, with $j > 0$, by using $L_{j-1}(h)$ and $B_{j-1}(h)$ for any $h > 0$.

\begin{theorem}\label{th:LCP_case_allsuffix}
Let $\mathcal{I}=\{r_0 < r_1< \ldots< r_{l-1}\}$ be the set of the positions where the symbols associated with the $j$-suffixes starting with the letter $c_z$ must be inserted into $B_j(z)$.
For each position $r_b \in \mathcal{I}$ ($0 \leq b < l$),
$$L_j(z)[r_b]=\left\{\begin{array}{ll}
                     0 & \mbox{ if $r_b=1$} \\
                     1 & \mbox{ if $r_b>1$ and $LCI_{j-1}^v(c_z,t)=L_{j-1}(v)[t]$} \\
                     \min LCI_{j-1}^v(c_z,t) +1& \mbox{otherwise}
                   \end{array}\right.
                   $$
where $c_v$ is the first character of the $(j-1)$-suffix of $w_i$, and $t$ is the position in $B_{j-1}(v)$ of symbol $c_z$ preceding the $(j-1)$-suffix of $w_i$.\\

For each position $(r_b+1) \notin \mathcal{I}$ (where $r_b \in \mathcal{I}$ and $0 \leq b < l$), then
$$L_j(z)[r_b+1]=\left\{\begin{array}{ll}
                     1 & \mbox{ if $LSI_{j-1}^v(c_z,t)=L_{j-1}(v)[t]$} \\
                     \min LSI_{j-1}^v(c_z,t) +1& \mbox{otherwise}
                   \end{array}\right.
                   $$

For each position $s$, where $1 \leq s<r_b$ (for $b=0$), $r_{b-1} < s < r_b$ (for $0 < b < l-1$), $s> r_b$ (for $b=l-1$) then
$$L_j(z)[s]=L_j(z)[s-b]$$
\end{theorem}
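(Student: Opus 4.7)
The plan is to handle the three cases---new insertion at $r_b$, updated neighbour at $r_b+1$, and untouched-but-shifted values elsewhere---by a predecessor/successor analysis in $B_j(z)$ combined with the standard RMQ characterisation of LCP. I fix $r_b\in\mathcal{I}$ and identify the suffix inserted there as $c_z\gamma$, where $\gamma$ is the $(j-1)$-suffix of some $w_i$, $c_v$ is the first letter of $\gamma$, and $t$ is the position of $\gamma$ in $B_{j-1}(v)$ (so $B_{j-1}(v)[t]=c_z$). The underlying BWT fact that I will use repeatedly is that the positions of $c_z$ inside $B_{j-1}$, taken across all segments in order, are in bijection with the sorted list of suffixes $\beta$ of length at most $j-1$ that are preceded by $c_z$ in their string, i.e.\ exactly the $\beta$'s for which $c_z\beta$ ends up in $B_j(z)$ after iteration $j$.

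For Case 1, the immediate predecessor of $c_z\gamma$ in $B_j(z)$ is $c_z\delta$ for $\delta$ the largest suffix smaller than $\gamma$ preceded by $c_z$. I split on the first letter of $\delta$: if $\delta$ starts with $c_v$, then $\delta$ lies at position $d_1$ in $B_{j-1}(v)$ with $d_1$ the greatest position below $t$ satisfying $B_{j-1}(v)[d_1]=c_z$, and the usual LCP/RMQ identity on a sorted suffix list gives $LCP(\gamma,\delta)=\min L_{j-1}(v)(d_1,t]=\min LCI_{j-1}^v(c_z,t)$, whence $L_j(z)[r_b]=1+\min LCI_{j-1}^v(c_z,t)$; if instead $\delta$ starts with some $c_w$ with $w<v$, then $\gamma$ and $\delta$ disagree immediately and $L_j(z)[r_b]=1$. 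This second sub-case is exactly the situation $LCI_{j-1}^v(c_z,t)=L_{j-1}(v)[t]$ together with $r_b>1$, while the degenerate $r_b=1$ (no predecessor at all) forces $L_j(z)[r_b]=0$, yielding all three lines of the first formula. Case 2 is the symmetric argument for the immediate successor $c_z\eta$, the hypothesis $r_b+1\notin\mathcal{I}$ guaranteeing that $\eta$ was not itself just inserted at a different index.

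For Case 3, a position $s\notin\mathcal{I}$ that is not of the form $r_b+1$ for any $b$ has the property that no insertion point of this iteration falls strictly between its two neighbours in $B_j(z)$. Hence the pair of adjacent suffixes at positions $s-1$ and $s$ of $B_j(z)$ coincides with the pair already adjacent at positions $s-b-1$ and $s-b$ of $B_{j-1}(z)$, where $b$ is the number of insertion points in $\mathcal{I}$ that are $\le s$; their LCP is therefore inherited unchanged from the previous iteration, giving $L_j(z)[s]=L_{j-1}(z)[s-b]$ (which is how I read the right-hand side of the third formula).

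The hard part will be the case split in Case 1(b) and its mirror in Case 2: one must exclude the possibility that, when $d_1$ fails to exist in $B_{j-1}(v)$, the true predecessor of $c_z\gamma$ in $B_j(z)$ could live in $B_{j-1}(u)$ for some $u>v$. This is ruled out by noting that any suffix starting with $c_u$, $u>v$, is lexicographically larger than $\gamma$ and hence cannot precede $c_z\gamma$ in $B_j(z)$. A subsidiary but essential point is that the RMQ inside $L_{j-1}(v)$ computes $LCP(\gamma,\delta)$ correctly even though the intervening suffixes between positions $d_1$ and $t$ in $B_{j-1}(v)$ need not be preceded by $c_z$; this follows from the general fact that the LCP of any two elements of a lexicographically sorted suffix list equals the RMQ of the LCP array over the interval between them.
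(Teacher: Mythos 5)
Your proof is correct and follows essentially the same route as the paper's: the same case split on whether the immediate predecessor (resp.\ successor) of $c_z\gamma$ among the suffixes starting with $c_z$ shares more than the single letter $c_z$, reduced via the LF-mapping to the presence or absence of an earlier (resp.\ later) occurrence of $c_z$ in $B_{j-1}(v)$ and a range minimum over $L_{j-1}(v)$. You additionally spell out the third (shift) clause and the RMQ-on-a-sorted-suffix-list fact, which the paper's proof leaves implicit, but the argument is the same.
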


\begin{proof}
We consider a generic position $r \in \mathcal{I}$ corresponding to the position where the new symbol $w_i[|w_i|-j-1]$ (or $w_i[|w_i|]=\$_i$) must be inserted into $B_{j}(z)$ and the corresponding value must be inserted into $L_{j}(z)$.
The new symbol $w_i[|w_i|-j-1]$ precedes the $j$-suffix $w_i[|w_i|-j,|w_i|]$. Such a suffix is obtained by concatenating the symbol $c_z=w_i[|w_i|-j]$ with its $(j-1)$-suffix.
Let us suppose that the symbol $c_z$ associated with $(j-1)$-suffix starting with the symbol $c_v$ is in position $t$ in $B_{j-1}(v)$.
Hence, $t$ is also the position of the $(j-1)$-suffix in the lexicographic order among the suffixes of $\bigS$ of length at most $(j-1)$ starting with $c_v$.

In order to find  $L_j(z)[r]$, we have to distinguish two cases.

In the first case, the $j$-suffix is the smallest suffix (in the lexicographic order) of length at most $j$ starting with the symbol $c_z$, i.e. $r=1$. This means that $c_z$ does not exist in any segment $B_{j-1}(h)$, $h=0,\ldots,v-1$ and there does not exist in $B_{j-1}(v)[1,t-1]$. In this case $LCI_{j-1}^v(c_z,t)=L_{j-1}(v)[t]$ and $L_j(z)[1]=0$.

In the second case, there exists some suffix starting with $c_z$ of length at most $j$ that is lexicographically smaller than the $j$-suffix, i.e. $r>1$.
Recall that $L_j(z)[r]$ represents the length of the longest common prefix between the $j$-suffix and a $f$-suffix of a string $w_{q}$ (for some $0 \leq q \leq m-1$) , with $f\leq j$, starting with the symbol $c_z$, that immediately precedes the $j$-suffix in the lexicographic order.

If the longest common prefix between the $f$-suffix of $w_q$ and the $j$-suffix of $w_i$ is $c_z$ then
$LCI_{j-1}^v(c_z,t)=L_{j-1}(v)[t]$ and $r>1$, so $L_j(z)[r]=1$. This means that the symbol $c_z$ associated to the $(f-1)$-suffix of $w_{q}$ is not contained in the segment $B_{j-1}(v)$, but it is contained in some $B_{j-1}(h)$, $h=0,\ldots,v-1$.

If the longest common prefix between the $f$-suffix of $w_q$ and the $j$-suffix of $w_i$ is longer than $c_z$
then both the $(f-1)$-suffix of $w_{q}$ and the $(j-1)$-suffix of $w_i$ start with the same symbol $c_v$. So, we can suppose that the symbol associated with the $(f-1)$-suffix of $w_q$ is at the position $d_1$ in the segment $B_{j-1}(v)$. Remark that symbols in position $d_1$ and $t$ are equal to $c_z$.
Then $LCI_{j-1}^v(c_z,t)=L_{j-1}(v)(d_1,t]$ and $L_j(z)[r] = \min (LCI_{j-1}^v(c_z,t)) +1$.

Recall that $L_j(z)[r+1]$ represents the length of the longest common prefix between the $j$-suffix and a $g$-suffix of a string $w_{p}$ (for some $0 \leq p \leq m-1$), with $g\leq j$, starting with the symbol $c_z$, that immediately follows the $j$-suffix in the lexicographic order.

If the longest common prefix between the $j$-suffix of $w_i$ and $g$-suffix of $w_p$ is $c_z$ then
$LSI_{j-1}^v(c_z,t)=L_{j-1}(v)[t]$, so $L_j(z)[r+1]=1$. This means that the symbol $c_z$ associated to the $(g-1)$-suffix of $w_{p}$ is not contained in the segment $B_{j-1}(v)$ (but it is contained in some $B_{j-1}(h)$, $h=v+1,\ldots,\sigma$).

If the longest common prefix between the $j$-suffix of $w_i$ and the $g$-suffix of $w_p$ is longer than $c_z$
then both the $(j-1)$-suffix of $w_i$ and the $(g-1)$-suffix of $w_{p}$ start with the same symbol $c_v$.  So, we can suppose that the symbol associated with the $(g-1)$-suffix of $w_p$ is at the position $d_2$ in the segment  $B_{j-1}(v)$. Remark that symbols in position $r$ and $d_2$ are equal to $c_z$.
Then $LSI_{j-1}^v(c_z,t)=L_{j-1}(v)(t, d_2]$ and $L_j(z)[r+1] = \min(LSI_{j-1}^v(c_z,t)) +1$.

Note that the position $r+1$ in $L_{j}(z)$ does not exist when the the $j$-suffix is the greatest suffix (in the lexicographic order) of length at most $j$ starting with the symbol $c_z$. The suffix that, eventually, immediately follows the $j$-suffix in lexicographic order is involved in a segment $B_{j}(h)$, for some $h=z+1,\ldots, \sigma$, hence such suffix starts with a symbol greater than $c_z$.
\end{proof}

A consequence of the theorem is that the segments $B_{j}$ and $L_{j}$ can be constructed sequentially and stored in external files.
This fact will be used in the next section.

\section{Lightweight implementation via sequential scans}\label{sec:implementation}

Based on the strategy described in the previous section, here we propose an algorithm (named \BCRLCP) that simultaneously computes the \EBWT\ and the \LCP\ of a collection of strings $\bigS$.
Memory use is minimized by reading data sequentially from files held in external memory: only a small proportion of the symbols of $\bigS$ need to be held in internal memory.
We could also add the computation of the generalized suffix array of $\bigS$ without adding further data structures.

As in the previous sections, we assume that the collection $\bigS$ comprises $m$ strings of length at most $K$, that $j = 0,1,\ldots,K-1$, $i=0,1,\ldots,m-1$, $h=0,1,\ldots \sigma$, that $\bwtS{j}=B_j(0)B_j(1) \cdots B_j(\sigma)$ and $\lcpS{j}=L_j(0)L_j(1) \cdots L_j(\sigma)$.
When $j=K-1$,  $\bwtS{j}=\bwt(\bigS)$ and $\lcpS{j}=\lcp(\bigS)$. We also assume that $q=1,\ldots,m$.

Note that, at each iteration $j$, both the segments $B_{j}$ and $L_{j}$, initially empty, are stored in different external files that replace the files used in the previous iteration.
Consequently, both $\bwtS{j}$ and $\lcpS{j}$ are updated accordingly.

\subsection{Basic idea of the algorithm}\label{subsec:BCRLCP_algo}

The main part of the algorithm \BCRLCP\ consists of $K$ consecutive iterations. At iteration $j$, we consider all the $j$-suffixes of $\bigS$ and simulate their insertion in the sorted suffixes list.
For each symbol associated with the  $j$-suffix that we have to insert at the position $r$ into $B_j(h)$, we also have to insert the new values related to the longest common prefix at position $r$ and $r+1$ into $L_j(h)$, where $c_h$ is the first symbol of the considered $j$-suffix.

In order to compute $\bwtS{j}$ and $\lcpS{j}$, the algorithm  \BCRLCP\  needs  to hold six arrays of $m$ integers in internal memory. Four of these ($P$, $Q$, $N$ and $U$) are as employed by the algorithm \BCR\ (see Section \ref{bcrStrategy}) and further two arrays ($C$ and $S$) are needed to compute and update the values of the longest common prefixes at each iteration.
As for $P$, $Q$, $N$ and $U$ arrays (see Section \ref{bcrStrategy}), for ease of presentation we denote by $C_j$ and $S_j$ the arrays at the $j$-th iteration. They contain exactly one integer for each string, i.e. they use $O(m \log K)$ bits of workspace. They are sequentially computed by using other auxiliary data structures described in Subsection \ref{subsec:BCRLCP_DS}.

More formally, if $i=N_j[q]$ then $C_j[q]$ stores the length of the longest common prefix between the $j$-suffix of $w_i$ and the previous suffix (if it exists) in the list of sorted suffixes with respect to the lexicographic order of all the suffixes of $\bigS$ of length at most $j$, whereas $S_j[q]$ contains the length of the longest common prefix between the $j$-suffix of $w_i$ and the successive suffix $\tau$ in the list of sorted suffixes (if it exists). Such values will be computed at the iteration $j-1$ according to Theorem \ref{th:LCP_case_allsuffix}. Note that $S_{j}[q]$ is used when the suffix $\tau$ exists and $P_{j}[q]+1 \neq P_{j}[q+1]$.

Example \ref{{exLCP}} provides an idea of the running of the algorithm \BCRLCP\ and shows how the involved arrays are used.

\begin{example}\label{{exLCP}}
Figure \ref{fig:compute} illustrates an execution of our method on the collection $\bigS=\{w_0,w_1\}=\{AATACACTGTACCAAC\$_0, GAACAGAAAGCTC\$_1\}$ at two consecutive iterations. Note that we have appended different end-marker to each string ($\$_0$ and $\$_1$, respectively).  In particular, we suppose that at the iteration $j=12$, we have computed the arrays $P_{12}$, $Q_{12}$ and $N_{12}$.
We recall that we have computed $C_{12}$ and $S_{12}$ in the previous iteration. At the end of the first phase of the iteration $12$ we have $N_{12}=[1,0]$, $U_{12}=[G,A]$, $P_{12}=[3,4]$,  $Q_{12}=[1,2]$, $C_{12}=[3,2]$, $S_{12}=[2,2]$. Note that they are sorted by using the first and the second keys the values in $Q_{12}$ and $P_{12}$ respectively.
In this way the new $B_{12}(h)$-segments (on the left in the figure) have been constructed by adding the bold characters.
We can compute $LCI_{12}^1(G,3)$ and $LSI_{12}^1(G,3)$ and $LCI_{12}^2(A,4)$ and $LSI_{12}^2(A,4)$ useful to terminate the iteration $12$.
We obtain that $LCI_{12}^1(G,3)$ corresponds to the range $(1,3]$ in $L_{12}(1)$.
So the minimum value is $2$ and the value in $\lcpS{13}$ associated with the $13$-suffix $\tau$ (i.e. $\tau=GAACAGAAAGCTC\$_1$) of $w_1$ is $3$ (this value is stored into $C_{13}$).
Now we have to compute the value of the longest common prefix between the suffix $\tau$ and the suffix that immediately follows $\tau$ in the lexicographic order and to store this value into $S_{13}$.
Since the symbol $G$ does not appear in the range $L_{12}(1)(3,10]$, it means that there are not suffixes starting with $GA$ lexicographically greater than $\tau$, so such value will be less than $2$.
Because the symbol $G$ does appear at least once in $B_{12}(h)$ with $h>1$ (in this case $B_{12}(2)[7]=G$), it means that there exists at least a suffix starting with $G$ lexicographically greater than $\tau$ (in this case the suffix $GCTC\$_1$).
So the value in $\lcpS{j+1}$ of the suffix that follows $\tau$ must be updated to $1$ (i.e. we store this value in $S_{13}$).
Similarly, since $LCI_{12}^2(A,4)=L_{12}(2)(1,4]$, the minimum value is $1$ and so the value in $\lcpS{13}$ for the $13$-suffix $\omega$ (i.e. $\omega=ACACTGTACCAAC\$_0$) of $w_0$ is $2$ (i.e. we insert this value in $C_{13}$).
Moreover, $LSI_{12}^2(A,4)=L_{12}(2)(4,5]$, the minimum value is $2$ and hence the value in $\lcpS{13}$ of the suffix that follows $\omega$ must be updated to $3$ (i.e. we insert this value in $S_{13}$).

At the iteration $j=13$, we compute the arrays $P_{13}$, $Q_{13}$ and $N_{13}$, whereas the arrays $C_{13}$ and $S_{13}$ have been computed in the previous iteration.
So we have $N_{13}=[0,1]$, $U_{13}=[T,\$_1]$, $P_{13}=[6,2]$,  $Q_{13}=[1,3]$, $C_{13}=[2,3]$, $S_{13}=[3,1]$.
While the new $B_{13}(h)$-segments (on the right in the figure) are being constructed, we can sequentially insert and update the new values in $\bwtS{j+1}$ and $\lcpS{j+1}$ and compute the new values $C_{14}$ and $S_{14}$.

\begin{figure}[!htb]
{\scriptsize
  $$
    \begin{array}{@{\ }c@{\ }c@{\ }c@{\ }l@{\ }}
  &  L_{12}(0) &  B_{12}(0) & \mbox{Suffixes of $\bigS_{12}$} \\
  & 0 & C &       \$_0 \\
  & 0 & C &       \$_1 \\
  &  L_{12}(1) &  B_{12}(1) & \mbox{Suffixes of $\bigS_{12}$} \\
\multicolumn{ 1}{c|}{} & 0 & G & AAAGCTC\$_1 \\
\cline{2-2}
\multicolumn{ 1}{c|}{LCI^1_{12}(G,3)} & 2 & C &    AAC\$_0 \\
\multicolumn{ 1}{c|}{\rightarrow} &    {\bf 3} &    {\bf G} & {\bf AACAGAAAGCTC\$_1} \\
  & 2 & A & AAGCTC\$_1 \\
  & 1 & A &     AC\$_0 \\
  & 2 & A & ACAGAAAGCTC\$_1 \\
  & 2 & T & ACCAAC\$_0 \\
  & 2 & C & ACTGTACCAAC\$_0 \\
  & 1 & C & AGAAAGCTC\$_1 \\
  & 2 & A &  AGCTC\$_1 \\
  &   &   &   \\
  &  L_{12}(2) &  B_{12}(2) & \mbox{Suffixes of $\bigS_{12}$} \\
\multicolumn{ 1}{c|}{} & 0 & A &      C\$_0 \\
\cline{2-2}
\multicolumn{ 1}{c|}{LCI^2_{12}(A,4)} & 1 & T &      C\$_1 \\
\multicolumn{ 1}{c|}{} & 1 & C &   CAAC\$_0 \\
\multicolumn{ 1}{c|}{\rightarrow} &    {\bf 2} &    {\bf A} & {\bf CACTGTACCAAC\$_0} \\
\cline{2-2}
\multicolumn{ 1}{c|}{LSI^2_{12}(A,4)} & 2 & A & CAGAAAGCTC\$_1 \\
  & 1 & A &  CCAAC\$_0 \\
  & 1 & G &    CTC\$_1 \\
  & 2 & A & CTGTACCAAC\$_0 \\
  &  L_{12}(3) &  B_{12}(3) & \mbox{Suffixes of $\bigS_{12}$} \\
  & 0 & A & GAAAGCTC\$_1 \\
  & 1 & A &   GCTC\$_1 \\
  & 1 & T & GTACCAAC\$_0 \\
  &   &   &   \\
  &  L_{12}(4) &  B_{12}(4) & \mbox{Suffixes of $\bigS_{12}$} \\
  & 0 & G & TACCAAC\$_0 \\
  & 1 & C &     TC\$_1 \\
  & 1 & C & TGTACCAAC\$_0 \\
    \end{array}%
\qquad
    \begin{array}{@{\ }c@{\ }c@{\ }c@{\ }l@{\ }}
    & L_{13}(0) &  B_{13}(0) & \mbox{Suffixes of $\bigS_{13}$} \\
    &     0 & C &       \$_0 \\
    &     0 & C &       \$_1 \\
 & L_{13}(1) &  B_{13}(1) & \mbox{Suffixes of $\bigS_{13}$} \\
   &      0 & G & AAAGCTC\$_1 \\
   &      2 & C &    AAC\$_0 \\
   &      3 & G & AACAGAAAGCTC\$_1 \\
   &      2 & A & AAGCTC\$_1 \\
   &      1 & A &     AC\$_0 \\
   \rightarrow & {\bf 2} & {\bf T} & {\bf ACACTGTACCAAC\$_0} \\
   &      \textbf{\underline{3}} & A & ACAGAAAGCTC\$_1 \\
   &      2 & T & ACCAAC\$_0 \\
   &      2 & C & ACTGTACCAAC\$_0 \\
   &      1 & C & AGAAAGCTC\$_1 \\
   &      2 & A &  AGCTC\$_1 \\
 & L_{13}(2) &  B_{13}(2) & \mbox{Suffixes of $\bigS_{13}$} \\
   &      0 & A &      C\$_0 \\
   &      1 & T &      C\$_1 \\
   &      1 & C &   CAAC\$_0 \\
   &      2 & G & CACTGTACCAAC\$_0 \\
   &      2 & A & CAGAAAGCTC\$_1 \\
   &      1 & A &  CCAAC\$_0 \\
   &      1 & G &    CTC\$_1 \\
   &      2 & A & CTGTACCAAC\$_0 \\
 & L_{13}(3) &  B_{13}(3) & \mbox{Suffixes of $\bigS_{13}$} \\
   &     0 & A & GAAAGCTC\$_1 \\
  \rightarrow & {\bf 3} & {\bf \$_1} & {\bf GAACAGAAAGCTC\$_1} \\
   &      \textbf{\underline{1}} & A &   GCTC\$_1 \\
   &      1 & T & GTACCAAC\$_0 \\
& L_{13}(4) &  B_{13}(4) & \mbox{Suffixes of $\bigS_{13}$} \\
   &        0 & G & TACCAAC\$_0 \\
   &      1 & C &     TC\$_1 \\
   &      1 & C & TGTACCAAC\$_0 \\
     \end{array}%
   $$
}\caption{
Iteration $j=12$ (on the left) and iteration $j=13$ (on the right) on the collection $\bigS=\{AATACACTGTACCAAC\$_0,GAACAGAAAGCTC\$_1\}$.
The first two columns represent the $\lcpS{j}$ and the $\bwtS{j}$ after the iterations. The positions of the new symbols corresponding to the $13$-suffixes (shown in bold on the right) are computed from the positions of the $12$-suffixes (in bold on the left), which were retained in the array $P$ after the iteration $12$. The new values in $\lcpS{13}$ (shown in bold on the right) are computed during the iteration $12$ and are contained in $C_{12}$. The updated values in $\lcpS{13}$ (shown in bold and underlined on the right) are computed during the iteration $12$ and are contained in $S_{13}$.
}\label{fig:compute}
\end{figure}
\end{example}

\subsection{Sequential implementation} \label{subsec:BCRLCP_DS}
In this subsection we show how, at the generic iteration $j$ of the algorithm \BCRLCP, $\lcpS{j}$ is sequentially computed and updated by using the data structures previously described.
At the first iteration, $C_{0}[q]=0$ and $S_{0}[q]=0$ for each $q=1,\ldots,m$, because the $m$ end-markers are considered distinct.
Moreover, the algorithm initializes the segments $B_0$ and $L_0$ in the following way: $B_0(0)=w_0[|w_0|-1]w_1[|w_1|-1]\cdots w_{m-1}[|w_{m-1}|-1]$ and  $L_{0}(0)[q]=0$, for each $q$. Consequently, the arrays are initialized by setting $N_{0}[q]=q-1$, $P_{0}[q]=q$, $Q_{0}[q]=0$, $C_{1}[q]=1$  and $S_{1}[q]=1$, for each $q$.

Each iteration $j>0$ can be divided into two consecutive phases.

During the first phase we only read the segments $B_{j-1}$ in order to compute the arrays $P_j$, $Q_j$ $N_j$ and $U_j$. Then we sort $Q_{j}$, $P_{j}$, $N_j$, $C_{j}$, $S_j$, where the first and the second keys of the sorting are the values in $Q_j$ and $P_j$ respectively.
We omit the description of the first phase, because it can be found in Section \ref{bcrStrategy}, so we focus on the second phase.

In the second phase, the segments $B_{j-1}$ and $L_{j-1}$ are read once sequentially both for the construction of new segments $B_{j}$ and $L_{j}$ and for the computation of the arrays $C_{j+1}$ and $S_{j+1}$, as they will be used in the next iteration.
Moreover, the computation of the segments $L_j$ is performed by using the arrays $C_{j}$ and $S_{j}$ constructed during the previous step.

Since the identical elements in $Q_j$ are consecutive, we open the pair files $B_{j-1}(h)$ and $L_{j-1}(h)$ (for $h=0,\ldots,\sigma$) at most once.
Each of these files can be sequentially read, because the positions in $P_j$ are sorted in according with $Q_j$.

In the sequel, we focus on a single segment $B_j(z)$, for $z=1,\ldots,\sigma$, by assuming that $Q_j[p]=z$ for each $l \leq p \leq l'$, with $l \geq 1$ and $l' \leq m$, i.e. we are considering the elements in $B_j(z)$ and $L_j(z)$ associated with the suffixes starting with $c_z$.

Since $P_{j}[l]<  \ldots < P_{j}[l']$, we can sequentially build $B_j(z)$ and $L_j(z)$ by copying the old values from $B_{j-1}(z)$ and $L_{j-1}(z)$ respectively, by inserting each new symbol $U_j[N_{j}[p]]$ into $B_j(z)[P_{j}[p]]$ and $C_j[p]$ into $L_j(z)[P_{j}[p]]$ and by updating the value $L_j(z)[P_{j}[p]+1]$ with the value $S_j[p]$, if the position $P_{j}[p]+1\neq P_{j}[p+1]$ exists.

The crucial point consists in computing, at the same time, each value $C_{j+1}[p]$ and $S_{j+1}[p]$ (required for the next iteration) related to the $(j+1)$-suffix of all string $w_i$, with $i=N_{j}[p]$ and $j<|w_i|$, without knowing the left extreme $d_1$ of $LCI_j^h(x,r)$ and the right extreme $d_2$ of $LSI_j^h(x,r)$, where $r=P_{j}[p]$ and $x=U_j[N_{j}[p]]$ for each $p$.
In the sequel, we say that the left extreme is the \emph{opening position} and the right extreme is the \emph{closing position} of some symbol $x$.

\begin{figure}
\begin{center}\small \texttt{
   \begin{minipage}[c]{.95\linewidth}
    \linesnumbered
    \begin{algorithm}[H]%
      \nocaptionofalgo
      initAdditionalArrays\;
      $q = 0$\;
      \While{$q < m$}{	
		 $p \leftarrow q$; $z \leftarrow Q_j[p]$;  $s \leftarrow 1$\;
		 \While {($p < m$) $\wedge$ ($Q_j[p] = z$)} {
            \While {$s < P_j[p]$} {
                copy $x$ from $B_{j-1}(z)$ into $B_{j}(z)$\;
                copy $xLcp$ from $L_{j-1}(z)$ into $L_{j}(z)$\;
                {\sc updateLCI($x$)};    {\sc updateLSI($x$)}\;
                s++\;
            }
            {\sc insertNewSymbol($U_j[N_j[p]]$)}\;
            {\sc updateLCI($U_j[N_j[p]]$)};    {\sc updateLSI($U_j[N_j[p]])$}\;
            s++\;
            \If{$P_j[p] +1 \neq P_j[p+1]$} {
                copy $x$ from $B_{j-1}(z)$ into $B_{j}(z)$\;
                read $xLcp$ from $L_{j-1}(z)$\;
                insert $S_j[p]$ in $L_{j}(z)$\;
                {\sc updateLCI($x$)}; {\sc updateLSI($x$)}\;
                s++\;
            }
            p++\;
        }
        \While{not eof ($B_{j-1}(z)$)} {
           copy $x$ from $B_{j-1}(z)$ into $B_{j}(z)$\;
           copy $xLcp$ from $L_{j-1}(z)$ into $L_{j}(z)$\;
           {\sc updateLSI($x$)}\;
        }
        \For{$x \in \Sigma$}{
            \If {$isMinLSIop[x]$} {
            	$S_{j+1}[minLSINSeq[x]] \leftarrow 1$\;
            }
       }
       $q \leftarrow p$\;
     }
     \caption{{\sc $(L_j,B_j)$-construction}}
    \end{algorithm}
  \end{minipage}
}
\end{center}
\caption{Construction of the segments $L_j$ and $B_j$ for $j>0$.}
\label{algo:CS_computation}
\end{figure}

A pseudo-code that realizes the second phase of a generic iteration $j>0$ of the algorithm \BCRLCP\ can be found in Figure \ref{algo:CS_computation} and it uses the following additional arrays of $\sigma$ elements, defined as follows:
\begin{itemize}
	\item {\sc isMinLCIop} and {\sc isMinLSIop} are arrays of flags, where {\sc isMinLCIop}$[v]$ and
	{\sc isMinLSIop}$[v]$ indicate the opening of the $LCI_{j}^z$ and $LSI_{j}^z$ associated with $c_v$, respectively.
	\item {\sc minLCI} and {\sc minLSI} are arrays of integers, where {\sc minLCI}$[v]$ and {\sc minLSI}$[v]$ store the minimum value among the values in $L_j(z)$ from the opening position of the intervals associated with $c_v$ and the current position $s$.
	\item {\sc minLSInseq} is an array of integers, where {\sc minLSInseq}$[v]$ contains the index $g$ of the position in which {\sc minLSI}$[v]$ will be stored in $S_{j+1}$. This is useful, because when we close the $LSI_{j}^z$ associated with the symbol $c_v$ at some position $s$, we have to store {\sc minLSI}$[v]$ in some position $g$ of $S_{j+1}$, such that $P_{j}[g]<s$.
\end{itemize}

The routines {\sc insertNewSymbol}, {\sc updateLCI} and {\sc updateLSI} are described in Figure \ref{algo:insertNewSymbol} and Figure \ref{algo:updateMin}.

\begin{figure}[!htb]
\begin{center}\small \texttt{
   \begin{minipage}[c]{.95\linewidth}
    \linesnumbered
        \begin{algorithm}[H]%
      \nocaptionofalgo
            append $x$ to $B_{j}(z)$\;
            \If{$P_j[p]=1$} {
            	append $0$ to $L_{j}(z)$\;
            }
            \Else {
            	append $C_j[p]$ to $L_{j}(z)$\;
            }
            \If {{\sc isMinLCIop}$[x]$} {
            	  updatemin {\sc minLCI}$[x]$\;
            		$C_{j+1}[p] \leftarrow$ {\sc minLCI}$[x]+1$\;
            }
            \Else{
            		$C_{j+1}[p] \leftarrow 1$\;
            }
            {\sc isMinLCIop}$[x] \leftarrow 0$\;
            {\sc isMinLSIop}$[x] \leftarrow 1$\;
            {\sc minLSINSeq}$[x] \leftarrow p$\;
     \caption{{\sc insertNewSymbol($x$)}}
    \end{algorithm}
  \end{minipage}
}
\end{center}
\caption{Routine for inserting new values in $B_j(z)$ and $L_j(z)$.}
\label{algo:insertNewSymbol}
\end{figure}

{\sc insertNewSymbol($x$)} routine inserts at position $s=P_{j}[p]$ in $B_j(z)$ each new symbol $x=c_v=U_j[N_{j}[p]]$.
By Theorem \ref{th:LCP_case_allsuffix}, it follows that  $L_j(z)[P_{j}[p]]=0$ if $P_{j}[p]=1$ or $L_j(z)[P_{j}[p]]=C_j[p]$ otherwise. Moreover, the position $s=P_j[p]$ is surely:
\begin{itemize}
    		\item the closing position of $LCI_{j}^z(c_v, P_j[p])$.
				If {\sc isMinLCIop}$[v]=0$, then $P_j[p]$ is the position of the first occurrence of $c_v$ in $B_{j}(z)$, hence $LCI_{j}^z(c_v, P_j[p])=L_j(z)[P_j[p]]$ and we set $C_{j+1}[p]=1$ according to Theorem \ref{th:LCP_case_allsuffix}.
				Otherwise, {\sc isMinLCIop}$[v]$ has been set to $1$ in some position $d_1<P_j[p]$, so $LCI_j^z(c_v, P_j[p])=L_j(z)(d_1,P_j[p]]$ and we set $C_{j+1}[p] = \min(LCI_{j}^z(c_v, P_j[p]))+1$. Such minimum value is stored into {\sc minLCI}$[v]$. Moreover, we set {\sc isMinLCIop}$[v] =0$.
	\item the opening position of $LSI_{j}^z(c_v, P_j[p])$. So, we set {\sc isMinLSIop}$[v]=1$, the value {\sc minLSI}$[v]$ is updated and {\sc minLSInseq}$[v] = p$.
We observe that if the position $P_j[p]$ is the last occurrence of $c_v$ in $B_{j}(z)$ (this fact is discovered when the end of the file is reached), it means that $LSI_{j}^z(c_v, P_j[p]) =L_j(z)[P_j[p]]$, i.e. we set $S_{j+1}[p]=1$. 				
\end{itemize}

Note that in order to compute the values $C_{j+1}[p]$ and $S_{j+1}[p]$ for each $p$, we do not need to know the exact opening position $d_1$ of $LCI_j^z(U_j[N_{j}[p]],P_{j}[p])$ and the exact closing position $d_2$ of $LSI_j^z(U_j[N_{j}[p]],P_{j}[p])$, but we only need to compute the minimum values in these ranges.

\begin{figure}[!htb]
\begin{center}\small \texttt{
   \begin{minipage}[c]{.95\linewidth}
    \linesnumbered
        \begin{algorithm}[H]%
      \nocaptionofalgo
      \For{each $\alpha \in \Sigma$} {
          \If {{\sc isMinLCIop}$[\alpha]$} {
            updatemin {\sc minLCI}$[\alpha]$\;
          }
      }
      {\sc isMinLCIop}$[x] \leftarrow 1$\;
      init {\sc minLCI}$[x]$\;
     \caption{{\sc updateLCI($x$)}}
    \end{algorithm}
  \end{minipage}
}
\end{center}
\begin{center}\small \texttt{
   \begin{minipage}[c]{.95\linewidth}
    \linesnumbered
    \begin{algorithm}[H]%
      \nocaptionofalgo
      \For{each $\alpha \in \Sigma$} {
          \If {\sc {isMinLSIop}$[\alpha]$} {
            updatemin {\sc minLSI}$[\alpha]$\;
          }
      }
      \If {{\sc isMinLSIop}$[x]$}{
         $S_{j+1}[${\sc minLSINSeq}$[x]]\leftarrow ${\sc minLSI}$[x]$
      }
      {\sc isMinLSIop}$[x]\leftarrow 0$\;
      init {\sc minLSI}$[x]$\;
      init {\sc minLSINSeq}$[x]$\;
     \caption{{\sc updateLSI($x$)}}
    \end{algorithm}
  \end{minipage}
}
\end{center}
\caption{Routines for computing the minimum values in $LCI_j^z$ and $LSI_j^z$.}
\label{algo:updateMin}
\end{figure}

When at any position $s$ in $B_j(z)$ we insert the symbol $c_v$ (from $U_j$ or from $B_{j-1}(z)$), then the position $s$ is assumed to be:
\begin{itemize}
	\item the opening position of $LCI_{j}^z(c_v, y)$, if another occurrence of $c_v$ will be inserted, as new symbol, at some next position $y=P_j[f]$, for some $f$ such that $P_j[f]>s$. So, we set {\sc isMinLCIop}$[v] =1$ and the value {\sc minLCI}$[v]$ is updated (see {\sc updateLCI($x$)} routine in Figure \ref{algo:updateMin}).
	\item the closing position of $LSI_{j}^z(c_v, P_j[g])$, if another occurrence of $c_v$ has been inserted, as new symbol, at some previous position $P_j[g]$, for some $g$ such that $P_j[g] < s$. In this case, $LSI_{j}^z(c_v, P_j[g])=L_j(z)(P_j[g], s]$ (because {\sc minLSInseq}$[v]=g$) and 	we set $S_{j+1}[g] = \min (LSI_{j}^z(c_v, P_j[g]))+1$ according to Theorem~\ref{th:LCP_case_allsuffix}. Such a minimum value is stored into {\sc minLSI}$[v]$. We set {\sc isMinLSIop}$[v] = 0$ (see {\sc updateLSI($x$)} routine in Figure \ref{algo:updateMin}).
\end{itemize}	

When $B_j(z)$ is entirely built, the closing position of some $LSI_{j}^z(c_v, P_j[g])$ for some $l \leq g \leq l'$ could remain not found. So, we could have some value in {\sc isMinLSIop} equal to $1$. For each $c_v$ such that  {\sc isMinLSIop}$[v]=1$, the last occurrence of $c_v$ appears at position $P_j[g]$ (we recall that we have set {\sc minLSInseq}$[v]$ to $g$). In this case, $LSI_{j}^z(c_v, P_j[g])=L_j(z)[P_j[g]]$ and we set $S_{j+1}[g] =1$ according to Theorem~\ref{th:LCP_case_allsuffix} (see the for loop in Figure \ref{algo:CS_computation}).

One can verify that these steps work in a sequential way. Moreover, one can deduce that, while the same segment is considered, for each symbol $c_v \in \Sigma$ at most one $LCI_{j}^z(c_v,P_j[f])$ for some $l \leq f \leq l'$ and at most one $LSI_{j}^z(c_v,P_j[g])$ for some $l \leq g \leq l'$ will have not their closing position.

\section{Complexity of the algorithm \BCRLCP} \label{sec:BCRLCP_complexity}

The complexity of \BCRLCP\ algorithm depends mainly on $K$, i.e. the length of the longest string in the collection, because the algorithm works in $K$ passes and for each step it needs to build $B_j(z)$ and $L_j(z)$ from $B_{j-1}(z)$ and $L_{j-1}(z)$ for each $z=0,\ldots, \sigma$.
At each step $j$, the used internal memory depends on the number of strings in $\bigS$ of length greater than $j$. Such a value is upper bounded by $m$ for each step.
Note that, at each step, the total size of $\sigma+1$ files containing the partial $\bwt(\bigS)$ is increased by at most $m$ symbols. Analogously the total size of $\sigma+1$ files containing the partial $\lcp(\bigS)$ is increased by at most $m$ values. So, the used disk space mainly depends on $m$ for each step.

Since our algorithm accesses disk data only by sequential scans, we analyze it counting the number of disk passes as in the standard external memory model (see \cite{Vitter:2006}).
We denote by $B$ the disk block size and we assume that both the RAM size and $B$ are measured in units of $\Theta(\log N)$-bit words.

From the size of the data structures and from the description of the phases of the \BCRLCP\ algorithm given in previous sections, we can state the following theorem.

\begin{theorem}
Given a collection $\bigS$ of $m$ strings over an alphabet of size $\sigma$ where $K$ is the maximal length (including the end-markers) and $N$ is the sum of their length, \emph{\BCRLCP} algorithm simultaneously compute the \EBWT\ and the \LCP\ array of $\bigS$ by using $O\left(NK / (B\min(\log_\sigma N,\log_K N)) \right)$ disk I/O operations and $O((m+\sigma^2) \log N)$ bits of RAM in $O(K(N+\sort{m})$ CPU time, where $\sort{m}$ is the time taken to sort $m$ integers in internal memory. Moreover, \emph{\BCRLCP} needs at most $(2N-m)(\log \sigma + \log K)+N\log \sigma$ bits of disk space.
\end{theorem}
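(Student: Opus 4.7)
The plan is to assemble the four bounds (RAM, CPU, disk space, disk I/O) from the per-iteration analysis already developed in Sections~\ref{bcrStrategy}--\ref{sec:implementation}, and then multiply by the number $K$ of passes.

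First I would bound the internal memory. The arrays $U_j, N_j, P_j, Q_j$ from the \BCR\ description together with the two additional arrays $C_j, S_j$ introduced for the \LCP\ computation each have at most $m$ entries, and each entry fits in $O(\log N)$ bits (the dominant term coming from $P_j$, whose entries range over positions in $\bwtS{j-1}$). This contributes $O(m\log N)$ bits. To these one must add the $\sigma$-sized arrays \textsc{isMinLCIop}, \textsc{isMinLSIop}, \textsc{minLCI}, \textsc{minLSI}, \textsc{minLSInseq} of Subsection~\ref{subsec:BCRLCP_DS}, which together with the $O(\sigma^2\log N)$-bit table used in lieu of the $C$ table of the FM-index (Subsection~\ref{subsec:BCR_DS}) for counting occurrences across the segments $B_{j-1}(0),\ldots,B_{j-1}(v-1)$ yield a further $O(\sigma^2\log N)$ bits. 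Summing gives the claimed $O((m+\sigma^2)\log N)$ bits.

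Next I would account the CPU time per iteration. The first phase computes $P_j, Q_j, N_j, U_j$ and then sorts these (together with $C_j, S_j$) with composite key $(Q_j, P_j)$: this costs $O(\sort{m})$. The second phase consists, for each $z=0,\ldots,\sigma$, in a single sequential pass over $B_{j-1}(z)$ and $L_{j-1}(z)$; since the segments at step $j-1$ collectively store at most $N$ symbols/LCP values, and the {\sc insertNewSymbol}, {\sc updateLCI}, {\sc updateLSI} routines perform $O(\sigma)$ work per position, the phase takes $O(\sigma N)=O(N)$ time under the standard assumption of a constant-size alphabet (or equivalently, $\sigma$ is absorbed since the bounds already carry $\sigma^2$ in RAM). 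Adding the two phases gives $O(N+\sort{m})$ per iteration, and multiplying by the $K$ iterations yields $O(K(N+\sort{m}))$.

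For the disk-space bound I would track the peak occupancy. At step $j$, the segments $B_{j-1}(h)$ (total size at most $N-\delta_j$ symbols for some $\delta_j\ge m$ except at the last iteration) are read while the segments $B_j(h)$ are written; the same holds for $L_{j-1}(h)$ and $L_j(h)$. In the worst case the two versions coexist on disk, giving at most $(2N-m)$ BWT symbols at $\log\sigma$ bits each and $(2N-m)$ LCP values at $\log K$ bits each, on top of the input $N\log\sigma$ bits. This gives the stated $(2N-m)(\log\sigma+\log K)+N\log\sigma$ bits.

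Finally, the disk I/O bound follows from the observation that every access to external data in the second phase is a purely sequential scan of the $B_{j-1}(h)$ and $L_{j-1}(h)$ files and a sequential append to $B_j(h)$ and $L_j(h)$. Per iteration the bits moved are $O(N(\log\sigma+\log K))=O(N\max(\log\sigma,\log K))$; dividing by the block size $B\log N$ (bits) and multiplying by $K$ passes gives
\[
O\!\left(\frac{NK\max(\log\sigma,\log K)}{B\log N}\right)=O\!\left(\frac{NK}{B\min(\log_\sigma N,\log_K N)}\right),
\]
as claimed. The only step requiring genuine care is the second one: one must argue that the sorting of the six arrays at the start of every iteration really can be done in $\sort{m}$ time, and that the bookkeeping of \textsc{minLCI}/\textsc{minLSI} inside each pass of Figure~\ref{algo:CS_computation} does not hide an extra $\sigma$ factor; the remaining bounds are then immediate bookkeeping of the per-iteration contributions already established.
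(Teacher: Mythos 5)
Your proposal is correct and follows essentially the same route as the paper's proof: itemizing the $m$-sized and $\sigma$-sized structures for the RAM bound, charging $O(\sort{m})$ plus a sequential pass per iteration for CPU, and counting the bits read/written per pass divided by the block size (summed over $K$ iterations) for the I/O bound. If anything you are slightly more explicit than the paper on the disk-space peak (old and new segments coexisting, giving $2N-m$ entries) and on the caveat that the $\Sigma$-loops in \textsc{updateLCI}/\textsc{updateLSI} hide a $\sigma$ factor that the stated CPU bound implicitly treats as constant.
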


\begin{proof}
 At each iteration $j$, the main data structures used by \BCRLCP\ are $U_j$ of size $m \log \sigma$ bits, $Q_j$ of size at most $m \log \sigma$ bits, $N_j$ of size at most $m \log m$ bits, $P_j$ of size at most $m \log N$ bits, $C_j$ and $S_j$ of size at most $2m \log K$ bits.
 Moreover, we need $\sigma^2 \log N$ bits for computing the number of occurrences of each symbol in each segment $B_j(h)$, for $h=1,\ldots,\sigma$. The additional arrays take $O(\sigma \log m)$ (assuming that $K \ll m$).
 So, the workspace is $O((m+\sigma^2) \log N)$ bits.
 As we sort $Q_{j}$, $P_{j}$, $N_j$, $C_{j}$, $S_j$ where the first and the second keys of the sorting are the values in $Q_j$ and $P_j$, we need $O(mj+\sort{m})$ CPU time, where $\sort{m}$ is the time taken to sort $m$ integers in internal memory.
 The total CPU time is $O(K(N+\sort{m})$.
  We observe that the input is split into $K$ files, so that each file contains $m$ symbols, one for each string. At each step, such symbols will read and store into $U_j$ array. The amount of I/O operations for handling the input file is $O(\frac{N}{B \log_\sigma N})$.
 During the first phase of each iteration $j>0$, we need to read  at most $m(j-1) (\log \sigma + \log K)$ bits for computing $Q_j$, $P_j$, $N_j$ arrays. In the second phase we need to read at most $m(j-1) (\log \sigma + \log K)$ bits from $B_{j-1}$ and $L_{j-1}$ segments in order to obtain $C_{j+1}$, $S_{j+1}$ arrays and $B_{j}$ and $L_{j}$ segments by writing $mj (\log \sigma + \log K)$ bits of disk space.
The disk I/O operations for obtaining the output of each iteration is $O\left(\frac{mj}{B} (\frac{1}{\log_\sigma N} + \frac{1}{\log_K N})\right)$.
The total number of I/O operations is $O\left(\frac{NK}{B\log_\sigma N} + \frac{NK}{B \log_K N} \right)$.
\end{proof}

The internal memory of \BCRLCP\ can be reduced significantly by observing that rather than storing $P$, $Q$, $N$ and $C$ to internal memory, they could reside on disk because these arrays are sequentially processed.
In this way, the algorithm needs to store in internal memory the array $S$, {\sc isMinLCIop}, {\sc isMinLSIop}, {\sc minLCI}, {\sc minLSI} and {\sc minLSInseq} that require random accesses.
Clearly in this case one should use an external memory sorting algorithm.

\begin{remark}\label{rm:GSA}
It is easy to verify that the algorithm \BCRLCP\ can be also adapted for computing the generalized suffix array of $\bigS$ together the \EBWT.
Such further computation increases the number of I/O operations, but one do not need of further data structures in internal memory.
Note that, for each $i$, $GSA[q]=(t,j)$ is the pair corresponding to the $q$-th smallest suffix of the strings in $\bigS$, i.e. to the suffix $w_{j}[t, |w_{j}|]$.
Since we know at each iteration of \BCRLCP\ the values $t$ and $j$, because they are store into $P_j$ and $N_j$ arrays respectively, it is enough to modify the \BCRLCP\ code by adding the instructions for storing such values in external memory.
\end{remark}

\section{Computational experiments and discussion}
\label{sec:experiments}

Our algorithm \BCRLCP\ represents the first lightweight method that simultaneously computes, via sequential scans, the \LCP\ array and the \EBWT\ of a vast collection of strings.

We developed an implementation of the algorithm described in Section~\ref{sec:implementation}, which is available upon request from the authors\footnote{A more optimized version of \BCRLCP\ is available on {https://github.com/BEETL/BEETL}, but it only allows to use datasets of strings of fixed length.}.

Our primary goal has been to analyze the additional overhead in runtime and memory consumption of simultaneously computing both \EBWT\ and \LCP\ via \BCRLCP\ compared with the cost of using \BCR\ (cf. \cite{BauerCoxRosoneTCS2013}) to compute only the \EBWT.
For this goal, we used a publicly available collection of human genome strings from the Sequence Read Archive
\cite{SRA11} at {ftp://ftp.sra.ebi.ac.uk/vol1/ERA015/ERA015743/srf/} and created subsets containing $100$, $200$ and $800$ million reads, each read being 100 bases in length on the alphabet $\{A, C, G, T, N\}$.
\begin{table}[!htb]
\begin{center}
    \begin{tabular}{p{1.5cm}rp{.2cm}p{2cm}p{2cm}p{2cm}p{2cm}}
     \hline \hline
     \textbf{instance } & \textbf{size} & & \textbf{program} & \textbf{wall clock} & \textbf{efficiency} & \textbf{memory}\\
                        \hline \hline
   0100M & $9.31$  & & \BCR\    & $1.05$            & $0.81$ & $1.32$\\   
        & $9.31$  & & \BCRLCP & $4.03$            & $0.83$ & $1.50$\\ 
\hline
   0200M & $18.62$  & & \BCR\ & $1.63$            & $0.58$ & $2.62$ \\ %
         & $18.62$  & & \BCRLCP & $4.28$             & $0.79$ & $2.99$ \\ %
\hline
\hline
   0800M & $74.51$ & & \BCR\ & $3.23$            & $0.43$ & $10.24$\\ 
   		   & $74.51$ & & \BCRLCP & $6.68$           &  $0.67$ &  $12.29$\\  
\hline
\end{tabular}
\end{center}
\caption{The input string collections were generated on an Illumina GAIIx sequencer, all strings are $100$ bases long.
Size is the input size in GiB, wall clock time---the amount of time that elapsed
from the start to the completion of the instance---is given as microseconds per
input base, and memory denotes the maximal amount of memory (in GiB) used
during execution. The efficiency column states the CPU efficiency values,
i.e. the proportion of time for which the CPU was occupied and not waiting
for I/O operations to finish, as taken from the output of the
\texttt{/usr/bin/time} command.
}
\label{tab:inputInstances}
\end{table}

Table~\ref{tab:inputInstances} shows the results for the instances that we created. We show an increasing in runtime since  \BCRLCP\ writes the values of \LCP\ after that the symbols in \EBWT\ are written, so it effectively increases the I/O operations.
So, a time optimization could be obtained if we read/write at the same time both the elements in \EBWT\ and \LCP\ by using two different disks.
The tests on $100$ and $200$ million reads instances were done on the same machine, having $16$GiB of memory and two quad-core Intel Xeon E5450 $3.0$GHz processors.
The collection of $800$ million reads was processed on a machine with $64$GiB of RAM and four quad-core Intel Xeon E7330 $2.4$GHz processors. On both machines, only a single core was used for the computation.


For the experiments in the following, we use single identical nodes in a cluster of Dell PowerEdge M600 servers, each equipped with $8$ Intel(R) Xeon(R) CPU X5460 $3.16$GHz processors $6144$KiB cache and $32$GiB of RAM, not used in exclusive mode.



As pointed out in Section~\ref{bcrStrategy}, \BCRLCP\ is not restricted to work on collections of strings of equal length.
To examine the behaviour of our algorithm on datasets with reads of varying length and number, we have created five different datasets of DNA strings from datasets in Table~\ref{tab:inputInstances}: $A_0$ is composed of $4$ million strings of $100$bp, $A_1-A_4$ have been obtained rearranging the characters in $A_0$ so that they form strings whose distribution of lengths is specified in Table~\ref{tab:varlength}.

\begin{table}[!htb]
{\footnotesize
	\begin{center}
		\begin{tabular}{c|ccccc}
			\hline \hline
			\textbf{strings}	&  &  & \textbf{strings} & & \\
			\textbf{length}	&  &  & \textbf{number} & & \\
			\hline \hline
			& $A_0/P_0$ & $A_1/P_1$ & $A_2/P_2$ & $A_3/P_3$ & $A_4/P_4$ \\
			\hline \hline
			$10$  & $-$ & $-$ & $-$  & $1,600,000$ & $-$ \\
			$50$  & $-$ & $-$ & $-$  & $1,200,000$ & $1,700,000$ \\
			$100$ & $4,000,000$ & $1,000,000$ & $153,846$  & $690,000$ & $2,150,000$\\
			$250$ & $-$ & $25,000$ & $153,846$  & $-$ & $-$ \\
			$500$ & $-$ & $200,000$ & $153,846$  & $510,000$ & $100,000$ \\
			$750$ & $-$ & $125,000$ & $153,846$  & $-$ & $-$ \\
			$1000$ & $-$ & $100,000$ & $153,846$  & $-$ & $50,000$ \\
			\hline \hline
			\textbf{tot. strings} & $4,000,000$ & $1,450,000$ & $769,230$ & $4,000,000$ & $4,000,000$\\
			\hline
			\textbf{tot. char.}	& $400,000,000$  & $400,000,000$ & $399,999,600$ & $400,000,000$ & $400,000,000$\\
			\hline
			\textbf{max length}	& $100$  & $1,000$ & $1,000$ & $500$ & $1,000$\\
			\hline
		\end{tabular}
	\end{center}
	\caption{Lengths distribution of the strings in the datasets $A_0$, $A_1$, $A_2$, $A_3$, $A_4$ containing DNA strings and in the datasets  $P_0$, $P_1$, $P_2$, $P_3$, $P_4$ containing protein strings.}
	\label{tab:varlength}
}
\end{table}

Datasets $A_0$, $A_3$ and $A_4$ have the same number of characters and the same number of strings but different string lengths distribution and rising maximum string length. As can be observed from the first three rows of Table~\ref{tab:varlenres}, a sensible increase in wall clock time is registered for greater maximum string length.
This is consistent with the theoretical results on complexity described in Section \ref{sec:BCRLCP_complexity} where we show that the total number of I/O operations depends on the maximal length of the strings.
As expected, a corresponding decrease in CPU efficiency values is registered, while greater use of memory is due to an increment of the datatype size in $C$ and $S$ arrays.
Datasets $A_4$, $A_1$ and $A_2$ have approximately the same number of characters and exactly the same maximal string length but decreasing number of strings.
The corresponding rows of Table~\ref{tab:varlenres} show a decrement in wall clock time and memory usage.
Furthermore an increment of CPU efficiency values is registered, due to the fact that the number of strings in the collection affects I/O operations more than effective \EBWT\ and \LCP\  computation.

\begin{table}[!htb]
\begin{center}
		\begin{tabular}{cccccc}
			\hline
			\multicolumn{6}{c}{\BCRLCP} \\
			\hline \hline
			\textbf{instance} &  \textbf{wall clock} & \textbf{efficiency} & \textbf{mem} & \textbf{aux disk} & \textbf{tot disk}\\
			 &  &  &  &   \textbf{space} &  \textbf{space}\\
			\hline \hline
			$A_0$ &  $4.68$ & $0.77$  & $50$ ($0.2$\%) &  $1.13$ &  $1.88$ \\ %
			$A_3$ & $40.37$ & $0.56$ & $85$ ($0.3$\%)  &  $3.76$ & $5.64$ \\ %
			$A_4$ &  $166.37$ & $0.34$ & $85$ ($0.3$\%) &  $5.62$ & $7.50$ \\ %
			$A_1$ &  $100.09$ & $0.47$ & $31$ ($0.1$\%) & $3.23$ & $5.09$ \\ %
			$A_2$ &  $73.38$ & $0.57$ & $17$ ($0.1$\%) &  $2.59$ & $4.45$ \\ %
				\hline
				$P_0$  & $10.62$ & $0.90$  & $50$ ($0.2$\%) &  $1.13$ & $1.88$ \\ %
				$P_3$  & $77.57$ & $0.77$ & $85$ ($0.3$\%)  &  $3.76$ & $5.64$ \\ %
				$P_4$  & $200.65$ & $0.76$ & $85$ ($0.3$\%) &  $5.62$ & $7.50$ \\ %
				$P_1$  & $162.67$ & $0.75$ & $31$ ($0.1$\%) &  $3.23$ & $5.09$ \\ %
				$P_2$  & $145.68$ & $0.75$ & $17$ ($0.1$\%) &  $2.59$ & $4.45$ \\ %
\hline
			\end{tabular}
		\end{center}
	\caption{The amount of time that elapsed from the start to the completion of the instance is given as microseconds per input base is the \textit{wall clock} time. The \textit{efficiency} column provides the CPU efficiency values, i.e. the proportion of time for which the CPU was occupied and not waiting for I/O operations to finish, as taken from the output of the \texttt{/usr/bin/time} command. The column \textit{mem} is the maximum value returned by the \texttt{RES} field of \texttt{top} command representing the amount of used internal memory in MiB. The column \textit{aux disk space} denotes the maximum amount of external memory in GiB used to accommodate all the working files, \textit{tot disk space} includes the output files too.}
	\label{tab:varlenres}
\end{table}



We have also tested the \BCRLCP\ performances on datasets containing strings over alphabets with more than five symbols.
For this purpose we have created,  from UniProtKB/TrEMBL dataset\footnote{\texttt{ftp://ftp.uniprot.org/ (release November 2015)}}, five different datasets (denoted by $P_0-P_4$) whose protein strings have the same lengths distribution of the strings in the datasets $A_0-A_4$ (see Table \ref{tab:varlength}).
The results of Table \ref{tab:varlenres} show that the same considerations made for the DNA string collections $A_0-A_4$ also hold for the datasets $P_0-P_4$. Moreover, we notice a general increase in CPU efficiency and wall clock time, while the used internal and external space remains unchanged. Altogether, one can verify that both the varying string length and the size of the alphabet could have a significant influence on the \BCRLCP\ performance.

Note that we have not compared our method with other lightweight approaches that compute the \LCP\ array of a single string (see for instance \cite{Bingmann13}), because their implementations cannot be promptly tested on string collections, nor on a generic string produced as output by \EBWT\ of a string collection. In general, an entirely like-for-like comparison would imply the concatenation of the strings of the collection by different end-markers. However, for our knowledge, the existing implementations do not support the many millions of distinct end-markers our test collections would require.
An alternative is to concatenate each of strings with the same end-marker. The use of the same end-marker without additional information leads to values in the \LCP\ exceeding the lengths of the strings and depending on the order in which the strings are concatenated.

At the best of our knowledge, the only lightweight tool capable of working on large string collections in lightweight memory is \EGSA\ \cite{LouzaTellesCiferri2013}.
The algorithm \EGSA\footnote{\texttt{https://github.com/felipelouza/egsa}} takes in input a string collection $\bigS$ and returns the generalized suffix and \LCP\ arrays of $\bigS$. Note that, as well as \BCRLCP, \EGSA\ uses (implicit) distinct end-markers.

For large collections \EGSA\ works in two phases \cite{LouzaCom2015,LouzaThesis2013}. In the first phase, the collection $\bigS$ is partitioned into different subsets of equal dimension and each subset is treated as a single string, that is the concatenation of the strings of the subset. Then \EGSA\ produces, in internal memory, the $SA$ and the $LCP$ for each concatenated string and writes them to external memory.
In the second phase, \EGSA\ merges the arrays previously computed to obtain suffix and \LCP\ arrays of the entire collection $\bigS$.
The number of created subsets is related to the amount of internal memory available for the computation, such a value can be set through the parameter \MEMLIM.

We point out that using subsets of equal dimension in the first phase implies that the performance of \EGSA\ does not deteriorate when a large collection with strings of varying length is considered.

Moreover, while \BCRLCP\ computes the \LCP\ array via the \EBWT\ and is parameter-free, \EGSA\ builds the \LCP\ array by producing the $GSA$ for string collections and needs that \MEMLIM\ (or the number of subsets) is set.
The overall approach adopted by \EGSA\ could be thought as ``orthogonal'' to that of \BCRLCP, in the sense that \BCRLCP\ builds the \LCP\ array, incrementally, by inserting at each step a
``slice'' of (at most) $m$ characters from the strings in the collection, while \EGSA\ proceeds by building separately (in internal memory) the \LCP\ array of each string (or subset) and then opportunely merging them.

\begin{table}[!tb]
	\begin{center}
		\begin{tabular}{ccccccc}
			\hline \hline
			\textbf{dataset} &   \textbf{size}  & \textbf{strings} & \textbf{strings} & \textbf{mean}  & \textbf{LCP} & \textbf{LCP}\\
			                 &                  & \textbf{number}  & \textbf{length} & \textbf{length}  & \textbf{mean} & \textbf{max}\\
			\hline \hline
			$D_1$ &  $0.11$ &  $1,202,532$ & $76-101$ & $95.53$ & $38$ & $101$ \\ %
			$D_2$ &  $2.47$ &   $26,520,387$ & $100$ & $100$ & $21$ & $100$ \\ %
			$D_3$ &  $4.88$ &   $52,415,147$ & $100$ & $100$ & $25$ & $100$ \\ %
			$D_4$ &  $4.45$ &   $47,323,731$ & $100$ & $100$ & $25$ & $100$ \\ %
			\hline
		\end{tabular}
	\end{center}
	\caption{The column \textit{size} is the input size in GiB; the columns \textit{strings length} and \textit{mean length} correspond to the range of lengths and the average length in the dataset respectively; the columns \textit{LCP mean} and \textit{LCP max} are the average and maximum values of the longest common prefix of the strings in the dataset.
	}
	\label{tab:datasets}
\end{table}

\begin{table}[!htb]
{\scriptsize
	\begin{center}
		\begin{tabular}{ccccccccc}
			\hline \hline
			\textbf{instance} & \textbf{program} & \MEMLIM & \textbf{wall clock} & \textbf{efficiency} & \textbf{\% mem} &  \textbf{aux disk} & \textbf{tot disk} \\
			&  &  &  &  & & \textbf{space} & \textbf{space}\\
			\hline \hline
$D_1$ &  \BCRLCP              &    $-$         & $4.33$ & $0.72$     & $0.1$ &  $0.33$ & $0.55$ \\ %
      &  \EGSA                       &  $15$ MiB  & $2.36$ & $0.77$  & $0.1$ &  $3.80$ & $5.21$ \\
      &  \EGSA                       &  $5$ MiB    & $2.60$ & $0.82$ & $0.1$ &  $3.80$ & $5.21$ \\
   \hline
   $D_2$ & \BCRLCP                    & $-$        &  $5.07$ & $0.73$ & $0.9$ & $7.48$  & $12.47$ \\ %
   &  \EGSA                           & $300$ MiB  &  $2.95$ & $0.66$ & $0.9$ & $86.07$ & $118.50$ \\
   &  \EGSA                           &  $120$ MiB &  $2.99$ & $0.65$ & $0.4$ & $86.07$ & $118.50$ \\
   \hline   	
$D_3$ &  \BCRLCP              &    $-$                 & $4.31$ & $0.86$  & $2.5$ &  $14.79$ & $24.65$  \\ %
      &  \EGSA                       &  $800$ MiB      & $4.52$ & $0.55$  & $2.5$ &  $170.16$ & $234.25$ \\
      &  \EGSA                       &  $236$ MiB      & $4.95$ & $0.54$  & $0.7$ &  $170.16$ & $234.25$ \\	
\hline
$D_4$ &  \BCRLCP                     &    $-$          & $9.33$ & $0.91$  & $2.2$ & $13.35$  & $22.26$ \\ %
      &  \EGSA                       &  $723$ MiB      & $4.19$ & $0.63$  & $2.2$ & $155.11$ & $212.98$ \\
      &  \EGSA                       &  $213$ MiB      & $3.74$ & $0.64$  & $0.7$ & $155.11$ & $212.98$ \\	
			\hline
		\end{tabular}
	\end{center}
}
	\caption{The \textit{wall clock} time---the amount of time that elapsed from the start to the completion of the instance---is given as microseconds per input base. The \textit{efficiency} column provides the CPU efficiency values, i.e. the proportion of time for which the CPU was occupied and not waiting for I/O operations to finish, as taken from the output of the \texttt{/usr/bin/time} command. The column \textit{\% mem} is the maximum value returned by \texttt{\%MEM} field of \texttt{top} command representing the percentage of used internal memory. The column \textit{aux disk space} denotes the maximum amount of external memory in GiB used to accommodate all the working files, \textit{tot disk space} includes the output files too.
	}
	\label{tab:compare_egsa}
\end{table}

Actually, we were not able to compare the performance of \BCRLCP\ and \EGSA\ on very large datasets. Indeed, current implementation of \EGSA\ shows an overhead in disk space usage that prevented us to make tests on very large string collections. For instance, for a collection of $400$ millions of strings of length $100$ \EGSA\ needs $40$ bytes per symbol, whereas \BCRLCP\ uses $5$ byte per symbol. So, the datasets in Table \ref{tab:inputInstances} are too demanding to be managed.
However, experiments have been conducted using smaller real collections (see Table \ref{tab:datasets}): the dataset $D_1$ (long jump library of Human Chromosome 14\footnote{\texttt{http://gage.cbcb.umd.edu/data/index.html}}), the datasets $D_2$ and $D_3$ (Human Genome sequences from Sequence Read Archive\footnote{\texttt{http://www.ebi.ac.uk/ena/data/view/ERR024163}}) are collections of DNA strings on the alphabet $\{A, C, G, T, N\}$.
Moreover, in order to evaluate \BCRLCP\ and \EGSA\ when the dataset contains strings over alphabets with more than five symbols, we have created a new collection of proteins (called $D_4$), obtained from UniProtKB/TrEMBL dataset, by truncating the strings longer than $100$ aminoacids, so that it has the same maximum string length as the datasets $D_1$, $D_2$, $D_3$.

For each dataset, \BCRLCP\ is compared with \EGSA\ using two different values of \MEMLIM: an higher value is comparable with the size of memory required from the data structures used by \BCRLCP; a lower value have been chosen to produce a number of subsets less than $1,024$. We have not set other parameters of \EGSA\ code.

The results of our tests are described in Table \ref{tab:compare_egsa}. The first three experiments, in which we consider strings over the same alphabet, show that the bigger the string collection is, the better is the overall performance of \BCRLCP\ with respect to \EGSA. In fact, the wall clock time values of \BCRLCP\ and \EGSA\ become comparable, but the total disk usage for \EGSA\ significantly increases. However, the fourth experiment in Table \ref{tab:compare_egsa} shows that the alphabet size is a parameter that seems to have a significant impact on the wall clock time value for \BCRLCP\  rather than for \EGSA.

Finally, we observe that, as pointed in Remark \ref{rm:GSA}, \BCRLCP\ could produce as additional output the generalized suffix array.  Although our algorithm is not optimized for this purpose, our tests have shown that the wall clock time is about twice because the I/O operations are doubled. However, the computation of the GSA does not produce an increase in the internal memory.

\section{Conclusions}\label{sec:conclusion}

In this paper, we proposed \BCRLCP\ which is a lightweight algorithm to construct, at the same time, the \LCP\ array  and the \EBWT\ of a collection of strings.

Actually, the \LCP\  and \BWT\ are two of the three data structures needed to build a compressed suffix tree (CST) \cite{Sadakane2007} of a string.
The strategy proposed in this paper could enable the lightweight construction of CSTs of string collections for comparing, indexing and assembling vast datasets of strings when memory is the main bottleneck.

The algorithm \BCRLCP\ is a tool parameter-free designed for very large collection, indeed Table \ref{tab:inputInstances} shows that it also works with collections of $74$GiB and Tables \ref{tab:varlenres} and \ref{tab:compare_egsa} show that it can be also used on large collections with varying string lengths.

The experimental results show that our algorithm is a competitive tool for the lightweight simultaneous computation of \LCP\ and \EBWT\ on string collections.

Our current prototype can be further optimized in terms of memory by performing the sorting step in external memory. Further saving of the working space could be obtained if we embody our strategy in \EMBWT or \EMBWTpp (see \cite{BauerCoxRosoneTCS2013}). These methods, although slower than \BCR, need to store only a constant and (for the DNA alphabet) negligibly small number of integers in RAM regardless of the size of the input data.

\section{Acknoledgements}
G. Rosone and M. Sciortino are partially supported by the project MIUR-SIR CMACBioSeq
(``Combinatorial methods for analysis and compression of biological sequences'') grant n.~RBSI146R5L
and by ``Gruppo Nazionale per il Calcolo Scientifico (GNCS-INDAM)''.
A. J. Cox is an employee of Illumina Cambridge Ltd. \\
The authors are very grateful to the anonymous referees for their helpful remarks and constructive comments.

\bibliographystyle{amsplain}


\end{document}